\setlist{nosep}
\tikzset{>=latex}
\newtheorem{thm}{Theorem}
\newtheorem{prop}{Proposition}
\newtheorem{lem}{Lemma}
\newtheorem{defn}{Definition}
\newtheorem{example}{Example}
\newtheorem{remark}{Remark}
\newcommand{\bR}{\mathbb{R}}
\newcommand{\cH}{\mathcal{H}}
\newcommand{\cK}{\mathcal{K}}
\newcommand{\cN}{\mathcal{N}}
\newcommand{\cT}{\mathcal{T}}
\DeclareMathOperator*{\argmax}{arg\,max}
\DeclareMathOperator*{\st}{s.t.}
\begin{document}
\title{\sf Effort Discrimination and Curvature of Contest Technology in Conflict Networks\footnote{Sun, Xu and Zhou are co-first authors. We thank the editor, an advisory editor, two anonymous referees as well as Marcin Dziubiński, Qiang Fu, Jingfeng Lu, John Vickers, Zenan Wu, Yves Zenou and seminar participants for their comments and useful suggestions. This research was supported in part by NSFC (72122017, 72293582, 71773087, 72073083, 72192842), Fok Ying Tong Education Foundation (171076), and  Tsinghua Strategy for Heightening Arts, Humanities and Social Sciences: “Plateaus \& Peaks” (No. 2022TSG08102).
The usual disclaimers apply.}}
\author{
Xiang~Sun\thanks{CEDR and IAS, Economics and Management School, Wuhan University, China. E-mail: \href{mailto:xiangsun.econ@gmail.com}{xiangsun.econ@gmail.com}.}
\and
Jin Xu\thanks{School of Economics, Shandong University, China. E-mail: \href{mailto:jinxu@sdu.edu.cn}{jinxu@sdu.edu.cn}.}
\and
Junjie Zhou\thanks{School of Economics and Management, Tsinghua University, China. E-mail: \href{mailto:zhoujj03001@gmail.com}{zhoujj03001@gmail.com}.}
}
\date{\today}
\maketitle

\begin{abstract}
In a model of interconnected conflicts on a network, we compare the equilibrium effort profiles and payoffs under two scenarios: uniform effort (UE) in which each contestant is restricted to exert the same effort across all the battles she participates, and discriminatory effort (DE) in which such a restriction is lifted. When the contest technology in each battle is of Tullock form, a surprising neutrality result holds within the class of semi-symmetric conflict network structures: both the aggregate actions and equilibrium payoffs under two regimes are the same. We also show that, in some sense, the Tullock form is necessary for such a neutrality result. Moving beyond the Tullock family, we further demonstrate how the curvature of contest technology shapes the welfare and effort effects.

\medskip
\noindent
\textbf{JEL classification}: C72; D74; D85

\medskip
\noindent
\textbf{Keywords}: Conflict network; Neutrality; Curvature of contest technology
\end{abstract}

\clearpage

% ===== * ===== * ===== * ===== * ===== * ===== * ===== * ===== * ===== * ===== *
\section{Introduction}
\label{sec:intro}

The structure of interaction or relation---visually represented as a network---has become increasingly important in shaping individually strategic choices, resulting in numerous studies of games played on networks within the last few years. Classical studies have mainly concentrated on the network games with linear best-replies and on the network games of strategic complements and substitutes; see, for instance, \cite{ballester2006s}, \cite{bramoulle2007}, \cite{galeotti2010}, etc. Those papers typically explore how network structure impacts equilibrium behavior in various settings.

Conflicts over networks are a class of network games, where contestants can simultaneously participate in multiple battles with various valuations and sizes. The multi-battle relationships can be conveniently modeled as a network, allowing complex conflictual relationships, beyond the traditional studies on contests without network structures. For instance, the leading high tech companies, such as Google, Apple, and Microsoft, invest a significant amount of resources into research and development (R\&D) on the internet markets, which comprises the basis for achieving competitive advantages over competitors. The firms' product range, to which their R\&D is dedicated, is relatively wide, including operating systems, AI, browsers, search engines, cloud services, etc. The strategic interaction among multiple competitors within multi-market(-product) can be conveniently analyzed using a network approach.

In this paper, we extend the framework of \cite{xu2022networks} and consider a conflict model in which {players simultaneously participate in multiple battles, and the valuations of these battles are dependent on their respective sizes}. A contestant's winning probability of a particular battle is specified by a logit contest success function. We compare two policy scenarios: uniform effort (UE) in which each contestant is restricted to exert the same effort across all the battles she participates, and discriminatory effort (DE) in which such a restriction is lifted. As a leading example, we consider marketing strategy of competitive multi-market companies. When advertising through traditional media, companies may struggle to deliver targeted advertising to specific submarkets. In this scenario, the company's standardized marketing strategy can be seen as an instance of uniform effort. However, with the advancement of information technology and the  availability  of data, companies now have the capability to develop customized marketing strategies for different submarkets, which can be regarded as an instance of discriminatory effort.\footnote{Conceptually, the comparison between DE and UE is related to the literature on (third-degree) price discrimination: DE is similar to charging differential prices in different market segments, while UE can be seen as limiting to a uniform pricing.}

In both scenarios of DE and UE, we first fully characterize the unique equilibrium effort profiles and payoffs in Propositions~\ref{prop:de-eqm} and \ref{prop:ue-eqm}, respectively. Under DE, the equilibrium is uniquely determined by the corresponding first order conditions (FOCs) where each contestant balances the marginal costs and marginal benefits across battles. Although these FOCs are highly nonlinear objectives, the solution to the FOC system is unique using the argument in \cite{xu2022networks}. Under UE, the corresponding FOCs reflect the constraints imposed by UE. Uniqueness can be similarly obtained. To make progress, we focus primarily on semi-symmetric conflict networks, in which each contestant engages in the same number of the battles with the same size, and the contest production functions and valuations depend on the size of battle. Several concrete examples of semi-symmetric conflict networks are given in Sections~\ref{sec:example} and \ref{sec:model}. Within this class of conflict structures, we obtain sharper equilibrium characterizations. In particular, the equilibrium under either scheme is interior and symmetric across players. Moreover, the equilibrium in DE is also shown to be semi-symmetric in the sense that each contestant exerts the same effort in battles of the same size.

To address the effect of effort discrimination, we compare aggregate actions and equilibrium payoffs between UE and DE. The comparative exercise is closely related to the production function $f$ in the logit contest success function and the inverse of its semi-elasticity $h=f/f'$.\footnote{Such a function has been considered in literature on contest; see, for example, \cite{fu2009beauty}.} When the contest success function is of Tullock form,\footnote{The production function $f$ is of the power form if and only if $h=f/f\rq{}$ is linear.} then a surprising neutrality result holds within the class of semi-symmetric conflict network structures: both the aggregate actions and equilibrium payoffs for each player under two regimes are the same. Moving beyond the Tullock form, the curvature of contest technology $h$ shapes the welfare and effort effects. More precisely, if $h$ is strictly convex (resp. concave), then DE has a lower (resp. higher) total effort and a higher (resp. lower) expected payoff than UE for each player. To obtain this result, we apply Jensen\rq{}s inequality to a set of reorganized equilibrium conditions. When neutrality does not hold, the choice between UE and DE may serve as a new instrument for contest designers.

We also show that the Tullock form for contest success function or the linearity of $h$ is also necessary for the neutrality result of effort discrimination; see Theorem~\ref{thm:neutrality-tullock}. A major step in the proof of Theorem~\ref{thm:neutrality-tullock} is constructing appropriate variations in battle valuations to prove that, under the neutrality of effort discrimination, $h$ must satisfy Cauchy's equation $h(z_1) + h(z_2) = h(z_1 + z_2)$. Then it is straightforward to see that $h$ is linear and the contest success function is of Tullock form. Thus, the neutrality of effort discrimination and the curvature of conflict technology in our setting are closely related. We also discuss several model extensions.

The literature on multi-battle contests mainly study how contestants allocate resources in multiple battles in a competitive environment; see \cite{kovenock2012contests} for a comprehensive survey. Vairous issues are addressed: the linkages across battles (objective or cost), the timing of moves (simultaneously or sequentially), the information structure (complete or incomplete), the cost function form (quadratic or budget constraint) and contest success functions (all-pay or Tullock); see, for instance, \cite{roberson2006colonel}, \cite{kvasov2007contests}, \cite{konrad2009multi}, \cite{fu2015team}, \cite{kovenock2021generalizations}, \cite{chowdhury2021focality}. In these works, the network often, but not always, takes a simple form where every battle involves all the participants. In our paper, we focus on semi-symmetric structures and illustrates the effort comparison between DE and UE.

Our paper builds on the recent but growing literature that studies equilibrium outcomes in network contests; see \cite{Goyal2016conflictsurvey} for a recent survey. The network characterizes players' social relations in society, so the network structure affects the level of effort of participants in different contests.\footnote{See \cite{jacksonzenou2015}, \cite{vigier2014}, \cite{jackson2015networks}, \cite{franke2015conflict}, \cite{bimpikis2016competitive}, \cite{Hiller2017}, \cite{konig2017}, \cite{Kovenock2018}, \cite{dziubinski2019strategy}, \cite{rietzke2017contests}, for example, all of which have a different focus than the present study and use specific forms.} \cite{franke2015conflict} and \cite{huremovic2021noncooperative} consider conflict networks where multiple participants are involved in multiple bilateral conflicts. \cite{xu2022networks} use variational inequality techniques to address equilibrium uniqueness and propagation of shocks in conflict networks. Typically in these models, a closed-form solution is not available, unless the network structure is very specific and players are symmetric both in terms of network positions and cost functions. \cite{konig2017} consider a single Tullock contest with positive (negative) spillovers by friends (enemies) in order to derive closed-form solutions; these solutions enable the structural estimation of a model in the context of the Great War of Africa. To obtain closed-form equilibrium solutions, \cite{rietzke2017contests} study special families of networks such as biregular graphs and stars with linear cost functions. A central feature of our modeling framework is that although the contest structure is symmetric among players, each player has to compete in battles with {various} sizes and every battle may involve part of all participants. The literature that explores the closed-form solution of individual effort on semi-symmetric network is relatively sparse. The present paper is also closely related to \cite{bimpikis2016competitive}, in which they examine a model of competition between firms that can target their marketing budgets to individuals embedded in a social network. They find that it is optimal for the firms to asymmetrically (discriminatorily) target a subset of the individuals under certain conditions. Our study attempts to provide a comprehensive answer about effects of effort discrimination, which are typically not addressed in these papers.

The comparison between DE and UE in our context bears some similarity to the literature on third-degree price discrimination; see \cite{varian1985price}, \cite{holmes1989effects}, \cite{corts1998third}, \cite{aguirre2010monopoly}, \cite{bbm2015}, \cite{bergemann2022third}, among others. As shown in the latter literature, price discrimination, if without further conditions on primitives, often has \emph{ambiguous} welfare and output effects. For example, \cite{aguirre2010monopoly} use curvature information of demand functions to derive sufficient conditions for discrimination to have positive or negative effects on social welfare and output. More strongly, \cite{bbm2015} use information design techniques to obtain the \emph{surplus triangle} result in the monopolistic setting. For comparison, we obtain an interesting neutrality result of effort discrimination on both welfare and total effort when the contest success functions take the Tullock form. In our setting, the curvature of contest technology $h=f/f'$ plays a critical role in shaping the welfare and effort effects, which is parallel to the demand curvature approach in showing the effects of price discrimination \citep{aguirre2010monopoly}.

The remainder of the paper is organized as follows. In Section~\ref{sec:example}, we present a motivating example, demonstrating that the effects of UE and DE on efforts and welfare relate to the curvature of function $h$. In Sections~\ref{sec:model} and \ref{sec:eqm}, we introduce model and provide the equilibrium analysis under both DE and UE. In particular, we establish the critical role of the curvature of $h$ in shaping the effects of effort discrimination, i.e., the comparisons between DE and UE in terms of equilibrium actions and payoffs. In Section~\ref{sec:neutrality}, we analyze neutrality and provide several discussions. All technical proofs are relegated in Appendix~\ref{sec:appendix}.

% ===== * ===== * ===== * ===== * ===== * ===== * ===== * ===== * ===== * ===== *
\section{A motivating example}
\label{sec:example}

There are three players $\{1, 2, 3\}$ and four battles $\{a, b, c, d\}$ in the conflict network represented by Figure~\ref{fig-triangle}. The details of each battle are given by Table~\ref{table-triangle}, where $v_2$ and $v_3$ denote the prizes for size-$2$ and size-$3$ battles, respectively.

\begin{figure}[htb!]
\begin{floatrow}
\hspace*{15pt}
\capbtabbox{%
\renewcommand{\arraystretch}{1.2}
\begin{tabular}{c||c|c}
Battle  & Participating players   & Prize \\\hline
$a$     & 1, 2      & $v_2 = 5$ \\
$b$     & 2, 3      & $v_2 = 5$ \\
$c$     & 3, 1      & $v_2 = 5$	\\
$d$     & 1, 2, 3   & $v_3 = 72$
\end{tabular}
\vspace*{5pt}
}{%
  \caption{Triangle conflict}%
  \label{table-triangle}
}
\ffigbox{%
  \begin{tikzpicture}[scale=.7]
\fill[green!60!white, opacity=0.4] (90:2cm)--(210:2cm)--(330:2cm)--cycle;
\draw [line width=3.5pt][red!80!white] (90:2cm) --node[above, pos=0.5, sloped]{$a$} (210:2cm) --node[below, pos=0.5, sloped]{$b$} (330:2cm) --node[above, pos=0.5, sloped]{$c$} cycle;
\node (1) at (90:2.02cm) {};
\node (2) at (210:2.02cm) {};
\node (3) at (330:2.02cm) {};
\node at (0,0) {$d$};
\fill (1) circle (0.1) node [above] {1};
\fill (2) circle (0.1) node [left] {2};
\fill (3) circle (0.1) node [right] {3};
\end{tikzpicture}
}{%
  \caption{Triangle conflict}%
  \label{fig-triangle}
}
\end{floatrow}
\end{figure}
\FloatBarrier

All the battles have logit form contest success functions, which admit a common contest production function $f$. We further assume that all players have the same quadratic cost function. For instance, player 1 participates in battles $a$, $c$ and $d$, and her expected payoff is
$$ v_2 \cdot \frac{f(x_1^a)}{f(x_1^a) + f(x_2^a)} + v_2 \cdot \frac{f(x_1^c)}{f(x_1^c) + f(x_3^c)} + v_3 \cdot \frac{f(x_1^d)}{f(x_1^d) + f(x_2^d) + f(x_3^d)} - \frac{1}{2} \bigl( x_1^a + x_1^c + x_1^d \bigr)^2, $$
where each $x_i^t$ is the effort player $i$ exerts in battle $t$.

The structure of the conflict network in Figure~\ref{fig-triangle} is symmetric. It is indeed a particular illustration of the semi-symmetric conflict network, which is formally defined later. We examine the (symmetric) equilibrium efforts (and payoffs) under two scenarios: the scenario of uniform effort (UE) in which each player is restricted to exert the same effort across all the battles she participates ($x_i^t=x_i^{t\rq{}}$ whenever $i$ participates in both battles $t$ and $t\rq{}$), and the scenario of discriminatory effort (DE) where players are allowed to exert different efforts across battles they participate. The analysis is conducted by the first order approach.

In this example we consider the following three forms of the production function $f$:
$$ f_1(x) = \frac{x}{x+1}, \ \ f_2(x) = 2 x^{\frac{1}{2}}, \ \ f_3(x) =
\begin{cases}
2 x^{\frac{1}{2}},	& \text{ if } x \le 1,	\\
x + 1,				& \text{ if } x > 1.
\end{cases}$$
Notice that each of them is an increasing and concave function with $f(0) = 0$. The following table summarizes the equilibrium total efforts of each player, for the three distinct production functions.

\begin{table}[htb!]
\centering
\renewcommand{\arraystretch}{1.2}
\begin{tabular}{c|c||ccc}
$f$			& $h = f/f'$\footnotemark{}	& Total effort under UE	&	& Total effort under DE	\\[2pt]\hline
$f_1(x)$	& convex		& 3.03304	& $>$	& 2.68415	\\
$f_2(x)$	& linear		& 3.04138	& $=$ & 3.04138	\\
$f_3(x)$	& concave		& 3.05522	& $<$	& 3.6833
\end{tabular}
%\caption{UE vs. DE}
\end{table}
\FloatBarrier
\footnotetext{The three corresponding $h$ functions are: $h_1(x) = x(1+x)$, $h_2(x) = 2x$, and $h_3(x) = \begin{cases}2x,&\text{if }x \le 1,\\ 1+x,&\text{if }x>1.\end{cases}$}

It is shown that the function $h := {f}/{f'}$ plays an important role in characterizing the equilibrium efforts under both scenarios. One may conjecture that the convexity (resp. concavity) of $h$ is a necessary and sufficient condition for the statement that the equilibrium total effort for each player under UE is higher (resp. lower) than that under DE. In each battle, the participants have the same probability of winning under symmetric equilibria. Thus, for each player, the higher total effort exerts, the lower benefit received. Hence, each player will have a lower (resp. higher) expected payoff under UE when the production function is $f_1$ (resp. $f_3$), and each player has the same expected payoff under UE and DE when the production function is $f_2$. So one may also conjecture that the curvature of $h$ is closely related to the comparison on player benefits between DE and UE. We formally address these issues below.

% ===== * ===== * ===== * ===== * ===== * ===== * ===== * ===== * ===== * ===== *
\section{Model}
\label{sec:model}

In this section, we introduce DE and UE after presenting a general model of conflict network.

\vspace{-6pt}
\paragraph{Players and battles} There are $N$ \emph{risk-neutral} players competing in $T$ different battles. The set of players is denoted by $\cN$ and players are indexed by $i = 1, 2, \ldots, N$. The set of battles is denoted by $\cT$. Both $N \ge 2$ and $T \ge 1$ are assumed to be finite.

\vspace{-6pt}
\paragraph{Conflict structure} The conflict structure is modeled by a network, which can be represented by an $N \times T$ matrix $\bm{\Gamma} = (\gamma_i^t)$: $\gamma_i^t = 1$ if player $i$ participates in the battle $t$, otherwise $\gamma_i^t = 0$. Let $\cN^t = \{ i \in \cN \mid \gamma_i^t = 1 \}$ denote the set of participants in battle $t$ and let $n^t = |\cN^t| = \sum_{i \in \cN} \gamma_i^t$ denote its size. Let $\cT_i = \{ t \in \cT \mid \gamma_i^t = 1 \}$ denote the set of battles that player $i$ attends and let $t_i = |\cT_i| = \sum_{t \in \cT} \gamma_i^t$ denote its cardinality.\footnote{Without loss of generality, we assume that the conflict structure does not include any dummy players or battles; that is, $n^t \ge 2$ for each $t \in \cT$ and $t_i \ge 1$ for each $i \in \cN$.} For each player $i$ and battle $t \in \cT_i$, if a player exerts zero effort in battle $t$, it is equivalent to not participating in battle $t$ altogether. Therefore, we allow players to self-select the battles they wish to participate in by either exerting positive effort or not.\footnote{We thank an anonymous referee for suggesting this point.}

The conflict structure is assumed to be semi-regular: every player takes part in the same number of battles with the same size. Formally, there exists a vector $\bm{d} = (d_2, \ldots, d_N)$ such that for each player $i \in \cN$, the number of size-$k$ battles that player $i$ participates in is always the number $d_k$, i.e., $\bigl| \{ t \in \cT_i \mid n^t = k \} \bigr| = d_k$. Let $\cK = \left\{ k \mid n^t = k \text{ for some battle $t$} \right\}$ denote the set of all possible sizes of battles.

\vspace{-6pt}
\paragraph{Conflict technology} In each $t \in \cT$, let $\bm{x}^t = (x_i^t)_{i \in \cN^t} \in \bR_+^{n^t}$ denote the effort vector of all the players participating in the battle $t$. For each battle $t$ in which player $i$ participates, her winning probability is determined by a logit form contest success function (CSF):
\begin{equation}
\label{eq:CSF}
p^t_i(\bm{x}^t) = \frac{f(x_i^t)}{\sum\limits_{j \in \cN^t} f(x_j^t)},\footnote{In the case that $\bm{x}^t = \bm{0}$, the winning probability $p^t_i(\bm{x}^t)$ is defined to be $\frac{1}{|\cN^t|} = \frac{1}{n^t}$.}
\end{equation}
where $f$ is the common contest production function of all battles, satisfying the conditions:
$f(0) = 0$ and for all $x > 0$, $f'(x) > 0$ and $f''(x) \le 0$.\footnote{This logit form of CSF is widely used in modeling contests and conflicts; see, for example, \cite{konrad2009strategy, franke2015conflict, konig2017}.}

For notational simplicity, we use $h$ to denote the inverse of the semi-elasticity of production function, i.e., $h = \frac{f}{f'}$. It is straightforward to verify that $h$ is strictly increasing in $(0, +\infty)$, $\lim\limits_{x \to 0+} h(x) = 0$, and $\lim\limits_{x \to +\infty} h(x) = +\infty$; see Lemma~\ref{lem:h} in Appendix~\ref{sec:appendix}.

\vspace{-6pt}
\paragraph{Valuation, cost, and payoff} In each battle $t$ with size $k$, the winning player obtains an exogenous prize $v^t = v_k > 0$ and others receive nothing.

For each player $i$, exerting efforts $\bm{x}_i = (x_i^t)_{t \in \cT_i}$ induces a cost $C(X_i)$, where $X_i = \sum_{t \in \cT_i} x_i^t$ denotes player $i$'s total effort in all battles that she takes part in. The cost function $C(\cdot) \colon \bR_+ \to \bR_+$ is assumed to be twice continuously differentiable, strictly increasing, and convex.

Thus, the expected payoff of each player $i$ is given by
\begin{equation}
\label{eq:payoff}
\Pi_i(\bm{x}_i, \bm{x}_{-i}) = \sum_{t \in \cT_i} v^t \cdot \frac{f(x_i^t)}{\sum\limits_{j \in \cN^t} f(x_j^t)} - C\Bigl(\sum_{t \in \cT_i} x_i^t\Bigr).
\end{equation}
In other words, payoffs are dependent on the sum of the battle values weighted by the corresponding winning probabilities minus the effort cost.

We have described a \emph{semi-symmetric conflict network} as a tuple $\bigl( \cN, \cT, \bm{\Gamma}, f(\cdot), (v_k)_{k\in\cK}, C(\cdot) \bigr)$. It is clear that the triangle conflict in Section~\ref{sec:example} is an example of semi-symmetric conflict network.

\vspace{-6pt}
\paragraph{UE and DE}
In a semi-symmetric conflict network, we shall consider the equilibrium efforts and payoffs under two scenarios: the scenario of uniform effort (UE) where each player is restricted to exert the same effort across all the participating battles, and the scenario of discriminatory effort (DE) in which such a restriction is lifted.

We slightly abuse the terminology by using DE and UE to represent the corresponding games.

% ===== * ===== * ===== * ===== * ===== * ===== * ===== * ===== * ===== * ===== *
\section{Equilibrium analysis}
\label{sec:eqm}

Given a semi-symmetric conflict network, we first investigate the equilibrium in DE. Under this scenario, we use $\bm{x}_i = (x_i^t)_{t \in \cT_i} \in \bR_+^{t_i}$ to denote a strategy of player $i$. A strategy profile $\bm{x} = (\bm{x}_i)_{i\in\cN}$ is said to be \emph{semi-symmetric} if there exists an effort vector $(x_k)_{k\in\cK}$ such that in every size-$k$ battle $t$, each involved player exerts the same effort $x_k$, i.e., $x_i^t = x_k$ for each $i\in\cN^t$. In other words, a semi-symmetric strategy profile requires that the effort each player exerts in a battle is size-determined. Alternatively, a semi-symmetric strategy profile can be represented by the corresponding effort vector $(x_k)_{k\in\cK}$.

We revisit the triangle conflict in Section~\ref{sec:example}. We denote a strategy of player 1 as $(x_1^a, x_1^c, x_1^d)$, a strategy of player 2 as $(x_2^a, x_2^b, x_2^d)$, and a strategy of player 3 as $(x_3^b, x_3^c, x_3^d)$, where the superscripts indicate the corresponding battles. Since $a$, $b$ and $c$ are all size-2 battles, the semi-symmetry on strategy profile requires that $x_1^a = x_2^a = x_2^b = x_3^b = x_3^c = x_1^c$. Analogously, the semi-symmetry also implies that $x_1^d = x_2^d = x_3^d$.

\begin{prop}
\label{prop:de-eqm}
For each semi-symmetric conflict network $\bigl( \cN, \cT, \bm{\Gamma}, f(\cdot), (v_k)_{k \in \cK}, C(\cdot) \bigr)$, there is a unique Nash equilibrium $\bm{x}^*$ under the scenario of discriminatory effort. Furthermore, $\bm{x}^*$ is semi-symmetric and interior. In particular, in this Nash equilibrium $\bm{x}^* = (x^*_k)_{k \in \cK}$, for each $k \in \cK$, the effort $x^*_k$ exerted in each size-$k$ battle satisfies
\begin{equation}
\label{eq:de-FOC}
v_k \cdot \frac{k-1}{k^2} \cdot \frac{f'(x^*_k)}{f(x^*_k)} = \lambda^*,
\end{equation}
where
\begin{equation}
\label{eq:de-MC}
\lambda^*  =  C' \Big( \sum_{\ell \in \cK} d_\ell x^*_\ell \Big)
\end{equation}
is the marginal cost in equilibrium.
\end{prop}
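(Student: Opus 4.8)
The plan is to establish three things in sequence: existence of an interior semi-symmetric equilibrium satisfying the stated first-order conditions, uniqueness of the Nash equilibrium, and then conclude that the unique equilibrium must be the semi-symmetric one we constructed. First I would write down player $i$'s best-response problem under DE. Since $\Pi_i$ is separable across battles in the benefit term but coupled through the cost $C(\sum_t x_i^t)$, the Lagrangian/KKT first-order condition for an interior effort in battle $t$ reads $v^t \cdot \partial_{x_i^t} p_i^t = C'(X_i)$, where the right-hand side is common across all of player $i$'s battles. I would compute $\partial_{x_i^t} p_i^t$ at a symmetric profile within battle $t$: when every participant in a size-$k$ battle exerts the same effort $x$, the winning probability is $1/k$ and a direct differentiation of the logit CSF gives $\partial_{x_i^t} p_i^t = \frac{f'(x)}{f(x)} \cdot \frac{k-1}{k^2}$, using $\sum_{j} f(x_j^t) = k f(x)$ and $p_i^t = 1/k$. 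This yields exactly equation~\eqref{eq:de-FOC} with $\lambda^* = C'(X_i)$.

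Next I would verify that the proposed semi-symmetric profile is consistent and well-defined. Because the network is semi-regular, each player participates in exactly $d_k$ battles of size $k$, so if every size-$k$ effort equals the common value $x_k$, then each player's total effort is the same, $X_i = \sum_{\ell \in \cK} d_\ell x_\ell$, making $\lambda^* = C'\bigl(\sum_\ell d_\ell x_\ell\bigr)$ genuinely player-independent; this is equation~\eqref{eq:de-MC}. To show such a profile exists, I would treat \eqref{eq:de-FOC} as defining, for each candidate marginal cost $\lambda > 0$, a unique effort level $x_k(\lambda)$ in each size class. Here I lean on the properties of $h = f/f'$ recorded before the statement: since $h$ is strictly increasing from $0$ to $+\infty$ on $(0,\infty)$, the equation $v_k \frac{k-1}{k^2} = \lambda \, h(x_k)$ has a unique positive solution $x_k(\lambda)$, strictly decreasing in $\lambda$. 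Substituting into \eqref{eq:de-MC} gives a single scalar fixed-point equation $\lambda = C'\bigl(\sum_\ell d_\ell x_\ell(\lambda)\bigr)$; the left side is increasing in $\lambda$ while the right side is decreasing (as $C'$ is increasing but composed with a decreasing argument), so a unique $\lambda^* > 0$ exists by a standard monotonicity/intermediate-value argument, with interiority automatic because each $x_k(\lambda^*) > 0$.

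The genuinely delicate step is uniqueness of the Nash equilibrium among \emph{all} strategy profiles, not merely among semi-symmetric ones. I would invoke the variational-inequality argument of \cite{xu2022networks}, which is cited in the text precisely for this purpose: the equilibrium of the DE game is characterized as the solution of a variational inequality whose associated operator is strictly monotone under the assumptions $f' > 0$, $f'' \le 0$, and $C$ convex and increasing. Strict monotonicity delivers at most one Nash equilibrium. The main obstacle is confirming that the monotonicity conditions of that framework are met by the present logit-plus-convex-cost specification and that boundary (corner) solutions are genuinely excluded, so that the unique equilibrium is interior and hence characterized by the full system of first-order conditions. Once uniqueness is in hand, the semi-symmetric profile constructed above is \emph{a} Nash equilibrium, so by uniqueness it is \emph{the} equilibrium; in particular the unique equilibrium inherits semi-symmetry and interiority and satisfies \eqref{eq:de-FOC}--\eqref{eq:de-MC}. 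I would close by remarking that second-order/concavity conditions guarantee the first-order conditions are sufficient, which together with the symmetry of the per-battle objective at equal efforts justifies that no profitable deviation to an asymmetric effort within a battle exists.
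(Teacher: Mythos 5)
Your proposal follows essentially the same route as the paper's own proof: derive the interior first-order condition $v_k\frac{k-1}{k^2}\frac{f'(x_k)}{f(x_k)}=C'(\sum_\ell d_\ell x_\ell)$ at a semi-symmetric profile, use the monotonicity and limit properties of $h=f/f'$ (Lemma~\ref{lem:h}) to reduce existence to a one-dimensional monotone fixed-point equation (you parametrize by the marginal cost $\lambda$ where the paper uses the total effort $\mu$, an immaterial difference), conclude the constructed profile is an equilibrium by concavity of $\Pi_i$ in $\bm{x}_i$, and obtain global uniqueness from Theorem~2(ii) of \cite{xu2022networks}, whence the unique equilibrium coincides with the constructed semi-symmetric interior one. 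This matches the paper's argument in both structure and substance.
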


Since $f(\cdot)$ is strictly increasing and concave, and $C(\cdot)$ is twice continuously differentiable, increasing, and convex, Theorem~1 in \cite{xu2022networks} guarantees the existence of Nash equilibria. Moreover, since $C(\cdot)$ is also strictly increasing, Theorem~2(ii) in \cite{xu2022networks} implies that DE admits a unique Nash equilibrium. The complete proof of Proposition~\ref{prop:de-eqm} is given in Appendix~\ref{sec:appendix}.

We revisit the triangle conflict in Section~\ref{sec:example}. Suppose the production function is $f$. Since $\cK = \{2, 3\}$, $d_2 = 2$, $d_3 = 1$, and $C(X) = \frac{1}{2}X^2$, Proposition~\ref{prop:de-eqm} implies that the equilibrium efforts $(x^*_2, x^*_3)$ in DE are characterized by the following first order conditions:
$$ v_2 \cdot \frac{1}{4} \cdot \frac{1}{h(x^*_2)} = \lambda^*, \quad v_3 \cdot \frac{2}{9} \cdot \frac{1}{h(x^*_3)} = \lambda^*, $$
where $x^*_2$ and $x^*_3$ are individual efforts exerting in size-2 battles and in size-3 battles, respectively, $\lambda^* = 2 x^*_2 + x^*_3 = X^*$ is the marginal cost,  {and $X^*$ is the total effort exerted by each player under DE}.

\begin{remark}
\rm
\label{rmd:de-eqm}
Let $\bigl( \cN, \cT, \bm{\Gamma}, f(\cdot), (v_k)_{k\in\cK}, C(\cdot) \bigr)$ be a semi-symmetric conflict network. Suppose $\hat{\bm{x}}$ is an interior semi-symmetric strategy profile with the effort vector $(\hat{x}_k)_{k\in\cK}$ (not necessarily a semi-symmetric Nash equilibrium). Then one can find valuations $(\hat{v}_k)_{k\in\cK}$ such that $\hat{\bm{x}}$ is the unique Nash equilibrium of the new semi-symmetric conflict network $\bigl( \cN, \cT, \bm{\Gamma}, f(\cdot), (\hat{v}_k)_{k\in\cK}, C(\cdot) \bigr)$ under DE.

To be more precise, given the conflict structure $\bm{\Gamma}$, the production function $f(\cdot)$, the cost function $C(\cdot)$, and the interior semi-symmetric strategy profile $\hat{\bm{x}}$, let
$$ \hat{v}_k = \frac{k^2}{k-1} \cdot \frac{f(\hat{x}_k)}{f'(\hat{x}_k)} \cdot C' \Bigl( \sum_{\ell \in \cK} d_\ell \hat{x}_\ell \Bigr) \text{ for each $k\in\cK$}. $$
Then Equations~\eqref{eq:de-FOC} and~\eqref{eq:de-MC} hold for $(\hat{x}_k)_{k\in\cK}$ and $(\hat{v}_k)_{k\in\cK}$. Thus, by Proposition~\ref{prop:de-eqm}, the given semi-symmetric strategy profile $\hat{\bm{x}}$ is the unique Nash equilibrium of the new conflict network $\bigl( \cN, \cT, \bm{\Gamma}, f(\cdot), (\hat{v}_k)_{k\in\cK}, C(\cdot) \bigr)$ under DE.
\end{remark}

\medskip

In the rest of this section, we consider the other scenario, which further requires that each player can only set a uniform effort level which is the same across all involved battles of her. A typical strategy for each player $i$ is to choose a single effort level $x_i$, so that $x_i^t = x_i^{t'} = x_i$ for all involved battles $t$ and $t'$. When all players adopt uniform efforts, each player $i$'s payoff function becomes
$$ \Pi^u_i(x_i, x_{-i})  =  \sum_{t \in \cT_i} v^t \cdot \frac{f(x_i)}{\sum\limits_{j \in \cN^t} f(x_j)} - C \Big( \sum_{\ell \in \cK} d_\ell x_i \Big). $$
We have the following equilibrium result.

\begin{prop}
\label{prop:ue-eqm}
For each semi-symmetric conflict network $\bigl( \cN, \cT, \bm{\Gamma}, f(\cdot), (v_k)_{k \in \cK}, C(\cdot) \bigr)$, there is a unique Nash equilibrium $\bm{x}^u = (x^u, x^u, \ldots, x^u)$ under the scenario of uniform effort:
\begin{equation}
\label{eq:ue-FOC}
\sum_{\ell \in \cK} d_\ell \cdot v_\ell \cdot \frac{\ell-1}{\ell^2} \cdot \frac{f'(x^u)}{f(x^u)} = \lambda^u,
\end{equation}
where
\begin{equation}
\label{eq:ue-MC}
\lambda^u  =  C' \Big( \sum_{\ell\in\cK} d_\ell x^u \Big) \cdot \Big( \sum_{\ell\in\cK} d_\ell \Big)
\end{equation}
is the marginal cost in equilibrium.
\end{prop}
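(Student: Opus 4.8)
My plan is to derive player~$i$'s first-order condition under uniform effort, argue that it is both necessary and sufficient for a best response, exhibit a symmetric profile that satisfies it, and then invoke a global uniqueness argument so that this symmetric profile is forced to be the unique equilibrium. Fix the efforts of the other players and write player~$i$'s reduced payoff as $\sum_{t\in\cT_i} v^t \tfrac{f(x_i)}{f(x_i)+S_t^i} - C(D x_i)$, where $D=\sum_{\ell\in\cK} d_\ell$ and $S_t^i=\sum_{j\in\cN^t\setminus\{i\}} f(x_j)$. For fixed opponents with $S_t^i>0$, each summand is the composition of the increasing concave map $u\mapsto u/(u+S_t^i)$ with the concave function $f$, hence concave in $x_i$; since $C$ is convex, the reduced payoff is concave in $x_i$, so a best response is pinned down exactly by the stationarity condition.

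Next I would construct the symmetric candidate. Evaluating stationarity at a profile in which every player exerts the common effort $x$, the marginal winning probability in a size-$\ell$ battle equals $\tfrac{\ell-1}{\ell^2}\cdot\tfrac{1}{h(x)}$ (using $f'/f=1/h$ and the fact that tied participants win with probability $1/\ell$), so—counting the $d_\ell$ battles of each size—the symmetric first-order condition reads $\tfrac{1}{h(x)}\sum_{\ell\in\cK} d_\ell v_\ell \tfrac{\ell-1}{\ell^2}=D\,C'(Dx)$, which is exactly~\eqref{eq:ue-FOC}--\eqref{eq:ue-MC}. By Lemma~\ref{lem:h} the left-hand side is continuous and strictly decreasing from $+\infty$ (as $x\to0^+$) down to $0$, while the right-hand side is positive and nondecreasing; hence there is a unique root $x^u>0$, which is interior precisely because $1/h(x)\to\infty$ keeps $x^u$ away from the boundary. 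Since the reduced payoff is concave, the profile $(x^u,\dots,x^u)$ is a Nash equilibrium of UE.

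The remaining and most delicate step is to show that UE admits no other (in particular, no asymmetric) equilibrium, after which uniqueness forces the equilibrium to coincide with the symmetric profile just built and yields the stated characterization. Here I would follow the route already used for DE in Proposition~\ref{prop:de-eqm} and adapt the variational-inequality uniqueness argument of \cite{xu2022networks}. The key observation is that a UE equilibrium is exactly a solution of the DE game's variational inequality restricted to the diagonal cone $\mathcal{D}=\prod_{i\in\cN}\{x_i\mathbf{1}: x_i\ge0\}$: along directions inside $\mathcal{D}$ the relevant derivative of $\Pi_i$ is $\sum_{t\in\cT_i}\partial_{x_i^t}\Pi_i$, which is precisely the scalar stationarity condition above. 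The strict monotonicity of the game operator underlying Theorem~2(ii) of \cite{xu2022networks} (guaranteed by $C$ being strictly increasing) is inherited on the convex subset $\mathcal{D}$, so the restricted variational inequality has at most one solution; combined with the existence just established, this gives exactly one equilibrium, namely the symmetric one.

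The main obstacle is this last uniqueness-cum-symmetry step. Semi-regularity only equalizes each player's profile of battle sizes; it does not make the network vertex-transitive, so there is no automorphism forcing equilibria to be symmetric, and symmetry must instead be deduced as a consequence of global uniqueness. A naive monotone-comparative-statics argument does not suffice, because the marginal benefit $v^t f'(x_i)S_t^i/(f(x_i)+S_t^i)^2$ is increasing in an opponent's effort when $f(x_i)>S_t^i$ but decreasing when $f(x_i)<S_t^i$, so best replies need not be monotone in opponents' actions. This non-monotonicity is exactly why the strictly monotone operator / variational-inequality machinery—rather than a lattice or contraction argument—is the right tool, and verifying that the strict monotonicity established for the unconstrained DE operator transfers to the diagonal-restricted problem is the part that requires care.
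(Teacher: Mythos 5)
Your proposal is correct and follows essentially the same route as the paper: establish concavity of each player's reduced payoff, derive the symmetric first-order condition, show it has a unique positive root using the monotonicity and limit properties of $h$ from Lemma~\ref{lem:h}, and obtain global uniqueness from the monotone-operator results of \cite{xu2022networks}. The only differences are cosmetic: the paper solves the fixed-point equation $\mu=\sum_{\ell\in\cK} d_\ell g(\mu)$ in the marginal-cost variable and cites Proposition~5 of \cite{xu2022networks} as a black box for uniqueness, whereas you analyze the scalar equation in $x$ directly and sketch the diagonal-restriction variational-inequality argument behind that citation.
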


\noindent The uniqueness of Nash equilibrium follows Proposition~5 in \cite{xu2022networks}. The complete proof of Proposition~\ref{prop:ue-eqm} is given in Appendix~\ref{sec:appendix}.

In the triangle conflict in Section \ref{sec:example}, the equilibrium effort $x^u$ in UE is characterized by the following first order condition:
$$ \frac{v_2}{2} \cdot \frac{1}{h(x^u)} + \frac{2v_3}{9} \cdot \frac{1}{h(x^u)} = \lambda^u, $$
where $\lambda^u = (d_2 x^u + d_3 x^u)(d_2 + d_3) = 9 x^u = 3 X^u$ is the marginal cost, {and $X^u$ is the total effort exerted by each player under UE}.

\medskip

The following theorem establishes a neat correspondence between the curvature of function $h = \frac{f}{f'}$ and the size relationship of equilibrium total efforts under DE and UE. It provides an affirmative answer for the conjecture in Section \ref{sec:example}.

\begin{thm}
\label{thm:comparison-h}
For any semi-symmetric conflict network and each player involved, (1) the total effort in DE does not exceed that in UE if $h$ is convex; (2) the total effort in DE is not less than that in UE if $h$ is concave; (3) DE and UE have the same total effort if $h$ is linear.\footnote{It is easy to see that $h'' = \frac{2ff''f''-f'f'f''-ff'f'''}{(f')^3}$. Since $f'>0$, $h$ is convex (resp. concave) if and only if $2ff''f''-f'f'f''-ff'f''' \ge 0$ (resp. $\le 0$). In a Tullock contest, the production function is $f(x) = x^r$ for some $r > 0$. Then we know $h(x) = \frac{f(x)}{f'(x)} = \frac{x}{r}$, which is linear. In a Hirshleifer contest, the production function is $f(x) = e^{\alpha x}$ for some $\alpha > 0$. The function $h(x)$ is $h(x) = \frac{1}{\alpha}$, which is a constant. If the production function is a CARA utility $f(x) = 1 - e^{-\alpha x}$ for some $\alpha > 0$, then $h(x) = \frac{1}{\alpha} (e^{\alpha x} -1)$, which is convex.}
\end{thm}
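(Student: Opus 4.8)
The plan is to reduce all three cases to a single application of Jensen's inequality against a cleverly reorganized pair of equilibrium conditions. First I would rewrite the first order conditions from Propositions~\ref{prop:de-eqm} and~\ref{prop:ue-eqm} in terms of $h = f/f'$. Writing $w_k = v_k \cdot \frac{k-1}{k^2}$ for the size-$k$ weight, the DE condition~\eqref{eq:de-FOC} becomes $C'(X^*)\, h(x^*_k) = w_k$ for every $k \in \cK$, where $X^* = \sum_{\ell\in\cK} d_\ell x^*_\ell$ is the DE total effort; the UE condition~\eqref{eq:ue-FOC} becomes $C'(X^u)\, D\, h(x^u) = \sum_{\ell\in\cK} d_\ell w_\ell$, where $D = \sum_{\ell\in\cK} d_\ell$ and $X^u = D x^u$ is the UE total effort. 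Introducing probability weights $\pi_k = d_k/D$, I would divide the UE relation by $D$ to obtain $C'(X^u)\, h(x^u) = \sum_{\ell\in\cK} \pi_\ell w_\ell$.

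Second, I would multiply the DE relation by $\pi_k$ and sum over $k \in \cK$, which yields $C'(X^*)\sum_{k\in\cK} \pi_k h(x^*_k) = \sum_{k\in\cK} \pi_k w_k$, whose right-hand side is identical to that of the reorganized UE relation. Equating the two produces the master identity
\begin{equation*}
C'(X^u)\, h(x^u) = C'(X^*) \sum_{k\in\cK} \pi_k h(x^*_k).
\end{equation*}
The two companion facts I would record are the averaging identities $x^u = X^u/D$ and $\sum_{k\in\cK} \pi_k x^*_k = X^*/D$, the latter because $X^* = \sum_{k} d_k x^*_k = D\sum_{k} \pi_k x^*_k$.

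Third, Jensen's inequality applied to the convex combination $\sum_{k} \pi_k h(x^*_k)$ relates it to $h(X^*/D)$: it is at least $h(X^*/D)$ when $h$ is convex, at most $h(X^*/D)$ when $h$ is concave, and exactly equal when $h$ is linear. Substituting into the master identity and defining $\Phi(X) = C'(X)\, h(X/D)$, the three cases become $\Phi(X^u) \ge \Phi(X^*)$, $\Phi(X^u) \le \Phi(X^*)$, and $\Phi(X^u) = \Phi(X^*)$, respectively. Since $C'$ is positive and nondecreasing (because $C$ is strictly increasing and convex) while $h$ is strictly positive and strictly increasing on $(0,+\infty)$ by Lemma~\ref{lem:h}, the product $\Phi$ is strictly increasing; inverting the three relations gives $X^* \le X^u$, $X^* \ge X^u$, and $X^* = X^u$, which are exactly statements (1)--(3).

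The argument is short once the reorganization is found, so I do not anticipate a serious analytic obstacle. The crux of the whole proof is recognizing that the weighted sum of the DE conditions and the rescaled UE condition collapse to the \emph{same} quantity $\sum_{k} \pi_k w_k$; this is precisely what lets me avoid separately solving the two nonlinear fixed-point problems for $\lambda^*$ and $\lambda^u$ and compare $X^*$ and $X^u$ directly. A secondary point requiring care is the strict monotonicity of $\Phi$ when $C$ is merely convex rather than strictly convex, which still holds because the strictly increasing, strictly positive factor $h$ dominates the nondecreasing, positive factor $C'$.
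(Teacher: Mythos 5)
Your proof is correct and follows essentially the same route as the paper's: both rewrite the first-order conditions in terms of $h$, take the $d_k$-weighted sum of the DE conditions so that it matches the rescaled UE condition, and apply Jensen's inequality to $\sum_k \pi_k h(x^*_k)$ versus $h(X^*/D)$. The only difference is presentational---the paper argues by contradiction on the ordering of $\lambda^*$ and $\lambda^u/\sum_\ell d_\ell$, whereas you conclude directly by inverting the strictly increasing map $\Phi(X)=C'(X)\,h(X/D)$, which handles the merely nondecreasing $C'$ cleanly.
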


We revisit the triangle conflict in Section \ref{sec:example} to illustrate the results in Theorem~\ref{thm:comparison-h} under a general production function $f$. Recall that the equilibrium efforts $(x^*_2, x^*_3)$ in DE and $x^u$ in UE are respectively characterized by the following first order conditions
\begin{minipage}[c]{0.395\textwidth}
\begin{align}[left=\empheqlbrace]
v_2 \cdot \tfrac{1}{4} \cdot \tfrac{1}{h(x^*_2)}	& = \lambda^*,	\label{eq:DE1}\\
v_3 \cdot \tfrac{2}{9} \cdot \tfrac{1}{h(x^*_3)}	& = \lambda^*,	\label{eq:DE2}\\
2 x^*_2 + x^*_3	& = \lambda^*.	\label{eq:DElambda}
\end{align}
\end{minipage}
\qquad
\begin{minipage}[c]{0.55\textwidth}
\begin{align}[left=\empheqlbrace]
\tfrac{v_2}{2} \cdot \tfrac{1}{h(x^u)} + \tfrac{2v_3}{9} \cdot \tfrac{1}{h(x^u)}	& = \lambda^u,	\label{eq:UE}\\
3(2 x^u + x^u)	& = \lambda^u.	\label{eq:UElambda}
\end{align}
\end{minipage}
According to the shape, especially the curvature of contest technology, $h$, we divide the discussion into two parts:
\begin{enumerate}[label=(\roman*)]
\item
Suppose that $h$ is a linear function, say $h(z) = \frac{z}{r}$ for some $r > 0$. By solving Equations~\eqref{eq:DE1}, \eqref{eq:DE2} and \eqref{eq:DElambda}, we have
$$ x^*_2 = \frac{\sqrt{9r}v_2}{\sqrt{8(9v_2+4v_3)}}, \ x^*_3 = \frac{\sqrt{8r}v_3}{\sqrt{9(9v_2+4v_3)}}. $$
Thus,
$$ X^* = 2 x^*_2 + x^*_3 = \sqrt{\frac{v_2r}{2}+\frac{2v_3r}{9}}. $$
On the other hand, from Equations~\eqref{eq:UE} and \eqref{eq:UElambda}, we know that
$$ 2 \frac{v_2}{4} \frac{r}{x^u} + \frac{2v_3}{9} \frac{r}{x^u} = \lambda^u = 9 x^u, $$
and hence
$$ X^u = 3 x^u = \sqrt{\frac{v_2r}{2}+\frac{2v_3r}{9}}, $$
which is the same as $X^*$.

\item
Next we consider the case when $h$ is convex.\footnote{The case with concave $h$ is similarly discussed.} We prove by contradiction. Suppose that DE has a higher total effort (i.e., $X^* > X^u$). From Equations~\eqref{eq:DE1} and \eqref{eq:DE2}, we have
\begin{align}
v_2 \cdot \frac{1}{4} \cdot \frac{1}{X^*}	& = h(x_2^*),	\label{eq:DE3}\\
v_3 \cdot \frac{2}{9} \cdot \frac{1}{X^*}	& = h(x_3^*).	\label{eq:DE4}
\end{align}
Summing Equations~\eqref{eq:DE3} and \eqref{eq:DE4} with respective weights $\frac{2}{3}$ and $\frac{1}{3}$, we have
$$ \frac{1}{X^*} \cdot \Big( \frac{v_2}{6} + \frac{2v_3}{27} \Big) = \frac{2}{3} h(x^*_2) + \frac{1}{3} h(x^*_3). $$
When $h$ is strictly convex, Jensen's inequality implies
$$ \frac{2}{3} h(x^*_2) + \frac{1}{3} h(x^*_3) > h \Big( \frac{2x^*_2+x^*_3}{3} \Big) > h(x^u), $$
where the last inequality follows from strict monotonicity of $h$. From Equation~\eqref{eq:UE}, we have $h(x^u) = \frac{1}{X^u} ( \frac{v_2}{6} + \frac{2v_3}{27} )$. Thus, $X^u > X^*$, which leads to a contradiction. Therefore, DE has a lower total effort when $h$ is strictly convex.
\end{enumerate}

\strut

Theorem~\ref{thm:comparison-h} follows from a similar argument: Based on Propositions~\ref{prop:de-eqm} and \ref{prop:ue-eqm}, we have the following equations:
\begin{align*}
\sum_{k \in \cK} d_k \cdot v_k \cdot \frac{k-1}{k^2} \cdot \frac{1}{\lambda^*}    & = \sum_{k \in \cK} d_k \cdot h(x^*_k),  \\
\sum_{k \in \cK} d_k \cdot v_k \cdot \frac{k-1}{k^2} \cdot \frac{\sum_\ell d_\ell}{\lambda^u}    & = \sum_{k \in \cK} d_k \cdot h(x^u).
\end{align*}
Suppose that $h$ is convex. We start with the same total efforts for simplicity, i.e., $\sum_k d_k \cdot x^*_k = \sum_k d_k \cdot x^u$. Since $h$ is convex, the discriminatory efforts $(x^*_k)$ make the weighted sum $\sum_k d_k \cdot h(x^*_k)$ larger than $\sum_k d_k \cdot h(x^u)$. It in turn implies that $\lambda^* = C' \big( \sum_\ell d_\ell x^*_\ell \big)$ is less than $\frac{\lambda^u}{\sum_\ell d_\ell} = C' \big( \sum_\ell d_\ell x^u \big)$. Equivalently, $\sum_\ell d_\ell x^*_\ell \le \sum_\ell d_\ell x^u$, i.e., the total effort in DE is smaller than that in UE.

The curvature of $h$ also plays a critical role in \cite{fu2009beauty}, who study how the total effort of contestants changes when a ``grand'' contest is allowed to be split into a set of parallel ``subcontests.'' When $h$ is convex or linear\footnote{See Definition 2 in \cite{fu2009beauty} and the discussion therein.}, \cite{fu2009beauty} show that a grand contest generates more effort than any set of subcontests. The convexity of $h$ (including the linear case) is shown to be a sufficient condition to derive the results in \cite{fu2009beauty}, and it is unknown whether the converse holds when $h$ is strictly concave.\footnote{See, for instance, \cite{fu2012ET,fu2021multi,fu2022JET,fu2023GEB} for recent advances in multi-prize contests.} In our setting, the convexity (concavity) of $h$ is necessary for UE (DE) to generate more effort. Moreover, Theorem~\ref{thm:comparison-h} provides a neutrality result---UE and DE induce the same total effort when $h$ is linear, i.e., both convex and concave.

Since the equilibrium in DE is semi-symmetric and the equilibrium in UE is symmetric, participants in each battle have the same probability of winning. So an individual player has a higher expected payoff if she exerts a lower total effort. Hence, a player has a higher (resp. lower) expected payoff under DE if $h$ is convex (resp. concave).

% ===== * ===== * ===== * ===== * ===== * ===== *
\section{Neutrality and discussion}
\label{sec:neutrality}

In this section, we will delve deeper into the property of (effort) neutrality, where DE and UE have the same total effort for each player. According to Theorem \ref{thm:comparison-h}, if $h$ is linear, then DE and UE will result in the same total effort for each player. Furthermore, since $h = \frac{f}{f'}$ and $f(0) = 0$, $h$ is a linear function if and only if $f$ is of the power form or $f(x) = x^r$, which is further equivalent to the logit form CSF being Tullock. In other words, the Tullock form of CSF or the linearity of $h$ guarantees the property of neutrality. We will demonstrate that the Tullock form of CSF or the linearity of $h$ is also necessary for neutrality to occur.

\subsection{Neutrality}

We take another look at the triangle conflict. Suppose the neutrality property holds generally for any valuations $v_2$ and $v_3$. In this case, DE and UE result in the same total effort for each player, meaning $X^* = X^u$. By substituting Equations~\eqref{eq:DE1} and \eqref{eq:DE2} into the first and second terms on the left-hand side of Equation~\eqref{eq:UE}, respectively, we obtain:
$$ \frac{2}{3} \cdot \frac{h(x^*_2)}{h(x^u)} + \frac{1}{3} \cdot \frac{h(x^*_3)}{h(x^u)} = 1. $$
Notice that $3 x^u = X^u = X^* = 2 x^*_2 + x^*_3$.

By varying the valuations $v_2$ and $v_3$, we can see from Remark \ref{rmd:de-eqm} that the equation
$$ \frac{2}{3} \cdot \frac{h(\hat{x}_2)}{h(\hat{x})} + \frac{1}{3} \cdot \frac{h(\hat{x}_3)}{h(\hat{x})} = 1 $$
holds for any positive $\hat{x}_2$, $\hat{x}_3$, and $\hat{x} = \frac{2}{3} \hat{x}_2 + \frac{1}{3} \hat{x}_3$.\footnote{For more details, please see the proof of Theorem~\ref{thm:neutrality-tullock}.} Equivalently, we have the equation
$$ \frac{2}{3} h(\hat{x}_2) + \frac{1}{3} h(\hat{x}_3) = h \Big( \frac{2 \hat{x}_2 + \hat{x}_3}{3} \Big) $$
for any positive $\hat{x}_2$ and $\hat{x}_3$. One can verify that the function $h$ satisfies Cauchy's equation $h(z_1) + h(z_2) = h(z_1 + z_2)$ for any positive $z_1$ and $z_2$, which in turn implies that $h$ should be a linear function, say $h(x) = \frac{x}{r}$ for some $r > 0$. It is then equivalent to $f(x) = x^r$.

Till here, we have an observation that Tullock form CSF (or linearity of $h$) is necessary for the generic property of neutrality in the triangle conflict. The formal statement of this result, where {we introduce the flexibility for production functions to be dependent on the sizes of battles}, is as follows.

Let $\cH$ be the collection of all semi-symmetric conflict networks such that (1) the set of players is $\cN$; (2) the set of battles is $\cT$; (3) the conflict structure is $\bm{\Gamma}$; (4) the size-determined production functions are $(f_k)_{k\in\cK}$; and (5) the cost function is $C(\cdot)$.

\begin{defn}
\label{defn:neutrality}
\rm
The collection $\cH$ is said to be (effort) \emph{neutral} if for any semi-symmetric conflict network $H$ in $\cH$, its semi-symmetric Nash equilibrium $\bm{x}^*$ under DE and the symmetric Nash equilibrium $\bm{x}^u$ under UE have the same total effort for each player.
\end{defn}

For each semi-symmetric conflict network $H$ in a neutral collection $\cH$, each player $i$ has the same winning probability in each battle $t$ under the two equilibria $\bm{x}^*$ and $\bm{x}^u$, and hence the same payoff.

\begin{thm}
\label{thm:neutrality-tullock}
Suppose that the collection $\cH$ is neutral. Then the production function $f_k(x)$ should be $x^{r_k}$ for some $r_k \in (0, 1]$.
\end{thm}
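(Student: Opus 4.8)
The plan is to convert the neutrality hypothesis into a functional equation for each $h_k=f_k/f_k'$ and show that it forces $h_k$ to be linear. Write $D=\sum_{\ell\in\cK}d_\ell$. First I would record the size-dependent forms of Propositions~\ref{prop:de-eqm} and~\ref{prop:ue-eqm}: at the DE equilibrium each size-$k$ effort satisfies
\[
v_k\cdot\frac{k-1}{k^2}\cdot\frac{1}{h_k(x_k^*)}=\lambda^*=C'\Bigl(\sum_{\ell\in\cK}d_\ell x_\ell^*\Bigr),
\]
while the UE effort $x^u$ obeys
\[
\sum_{\ell\in\cK}d_\ell\, v_\ell\cdot\frac{\ell-1}{\ell^2}\cdot\frac{1}{h_\ell(x^u)}=\lambda^u=D\,C'\bigl(Dx^u\bigr).
\]
By the size-dependent form of Remark~\ref{rmd:de-eqm}, for any prescribed interior vector $(\hat x_k)_{k\in\cK}$ the valuations $\hat v_k=\frac{k^2}{k-1}\,h_k(\hat x_k)\,C'(\sum_{\ell\in\cK}d_\ell\hat x_\ell)$ make $(\hat x_k)$ the unique DE equilibrium of a network in $\cH$.

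For these valuations neutrality equates the UE and DE total efforts, so $x^u=\bar x:=\frac{1}{D}\sum_{\ell\in\cK}d_\ell\hat x_\ell$, the $d$-weighted average. Substituting $\hat v_\ell$ into the UE first-order condition and cancelling the common positive factor $C'(\sum_{\ell\in\cK}d_\ell\hat x_\ell)$ gives the master identity
\[
\sum_{\ell\in\cK}d_\ell\,\frac{h_\ell(\hat x_\ell)}{h_\ell(\bar x)}=D=\sum_{\ell\in\cK}d_\ell,
\]
valid for every positive $(\hat x_\ell)_{\ell\in\cK}$.

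To disentangle the distinct functions $h_\ell$, I would fix two sizes $j\neq k$ and set $\hat x_\ell=\bar x$ for all $\ell\notin\{j,k\}$; consistency then forces $\bar x=\frac{d_ju+d_kv}{d_j+d_k}$ with $u:=\hat x_j$ and $v:=\hat x_k$, every spectator term equals $d_\ell$, and the master identity collapses to
\[
d_j\,\frac{h_j(u)}{h_j(\bar x)}+d_k\,\frac{h_k(v)}{h_k(\bar x)}=d_j+d_k
\]
for all $u,v>0$. Moving along the line $d_ju+d_kv=\mathrm{const}$ keeps $\bar x$ fixed, so differentiating this identity yields $\frac{h_j'(u)}{h_j(\bar x)}=\frac{h_k'(v)}{h_k(\bar x)}$. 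Taking $u=v=s$ (whence $\bar x=s$) gives $(\log h_j)'(s)=(\log h_k)'(s)$, hence $h_j=c\,h_k$ for a constant $c>0$; reinserting this proportionality leaves $h_k'(u)=h_k'(v)$ for all $u,v$, so $h_k$ is affine. Since $\lim_{x\to0+}h_k(x)=0$ by Lemma~\ref{lem:h} applied to $f_k$, the intercept vanishes and $h_k(x)=x/r_k$ for some $r_k>0$; integrating $f_k'/f_k=r_k/x$ gives $f_k(x)=x^{r_k}$ (up to an immaterial scale that cancels in the CSF), and $f_k'>0$, $f_k''\le0$ force $r_k\in(0,1]$. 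As $\cK$ contains at least two sizes (otherwise DE and UE coincide and neutrality is vacuous), every size occurs in such a pair, so the conclusion holds for all $k\in\cK$.

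The main obstacle is precisely the heterogeneity of the production functions: in the single-function triangle computation the master identity is an exact Jensen equality that reduces at once to Cauchy's equation, whereas here the identity mixes the different maps $h_\ell$ inside one weighted sum. The resolution is the two-step decoupling above---parking the spectator coordinates at the running average to isolate a pair, then using the constant-average perturbation together with the diagonal specialisation $u=v$ to extract the proportionality $h_j=c\,h_k$. That proportionality is exactly what turns the two-function equation back into the classical Jensen/Cauchy form, after which linearity of each $h_k$, equivalently the Tullock form $f_k(x)=x^{r_k}$, is immediate.
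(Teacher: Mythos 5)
Your proposal is correct and follows essentially the same route as the paper's proof: it realizes an arbitrary interior semi-symmetric profile as a DE equilibrium via the construction in Remark~\ref{rmd:de-eqm}, perturbs two sizes while holding the $d$-weighted total effort fixed so that the UE effort stays at the average, and extracts $\frac{h_j'}{h_j}=\frac{h_k'}{h_k}$ to obtain proportionality of the $h$'s. The only divergence is the endgame: the paper substitutes the proportionality back into the two-size identity and invokes a Cauchy-equation lemma (Lemma~\ref{lem:linear}) together with $\lim_{x\to 0+}h(x)=0$, whereas you conclude directly from the differentiated identity that $h_k'$ is constant --- both are valid and rest on the same regularity of $h$.
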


Theorem \ref{thm:neutrality-tullock} implies that each of the production function $f_k$ must be of Tullock form. Of course, a special case would be a common $f$ as in our baseline model. The example of triangular conflict in Section~\ref{sec:example} demonstrates why CSF is of Tullock form in a simple environment, in which the production functions are the same for each battle and each player. To handle heterogeneous production functions $f_k$, we construct an auxiliary semi-symmetric conflict network in the proof of Theorem~\ref{thm:neutrality-tullock}. It enables us to show that each $h_k := \frac{f_k}{f_k'}$ satisfies Cauchy's equation. It further implies that each $h_k$ is a linear function and $f_k(x)$ should be $x^{r_k}$. Focusing on a common production function $f$, Theorems~\ref{thm:comparison-h} and \ref{thm:neutrality-tullock} together state that the Tullock form CSF is a necessary and sufficient condition for the neutrality of effort discrimination.

\begin{remark}
A salient feature of Tullock technology is homogeneity of degree zero of the contest success function. Such a homogeneity property also plays a similar role in related studies; for instance, \cite{fu2015team} establish neutrality of temporal structures in a model of team contests with pairwise battles.
\end{remark}

The following example illustrates the tightness of semi-symmetry for the neutrality property.

\begin{example}
We consider a variation of the triangle conflict. Let the common production function be $f(x) = x$ and the common cost function be $C(X) = \frac{1}{2} X^{5/2}$. The conflict structure is the same as the triangle conflict, and the notation for each battle is retained. We suppose that the prizes for battles $b$, $c$, and $d$ are $1$, $1$, and $1.6$, respectively. By varying the prize of battle $a$, player 1's total efforts under UE and DE are collected in the following table. When the prize is not $1$, the neutrality does not hold and the comparison between UE and DE is ambiguous.

\begin{table}[htb!]
\centering
\renewcommand{\arraystretch}{1.2}
\begin{tabular}{c||ccc}
Prize of $a$	& Total effort under UE	&	& Total effort under DE	\\\hline
$0$		& 0.65607	& $<$	& 0.75592	\\
$0.1$	& 0.79202	& $>$	& 0.76728	\\
$1$		& 0.85928	& $=$	& 0.85928	\\
$2$		& 0.92031	& $<$	& 0.94612
\end{tabular}
\end{table}
\FloatBarrier

{Notice that by setting the prize of battle $a$ to be zero, the conflict structure reduces to a non-semi-regular structure, where there are only battles $b$, $c$, and $d$. The following numerical result implies that the neutrality does not necessarily hold if the structure is not semi-regular.}
\begin{table}[htb!]
\centering
\renewcommand{\arraystretch}{1.2}
\begin{tabular}{c||ccc}
			& Total effort under UE	&	& Total effort under DE	\\\hline
Player 1	& 0.65607	& $<$	& 0.75592	\\
Player 3	& 1.03262	& $>$	& 0.83914	\\
All players	& 2.34476	& $<$	& 2.35099
\end{tabular}
\end{table}
\FloatBarrier

\end{example}

\subsection{Discussion}

In this subsection, we will address several related issues, including comparative statics, optimal contest design, and equilibrium analysis for conflict network with pure budget.

\paragraph{Comparative static analysis} We take $C(X) = \frac{1}{\rho} X^\rho$ and $f_k(x) = x^{r_k}$, where $\rho \ge 1$ and each $r_k \in (0, 1]$ is an exogenous parameter for size-$k$ battles.

Using Propositions \ref{prop:de-eqm} and \ref{prop:ue-eqm}, we explicitly compute the equilibrium efforts and the total efforts under both scenarios:
$$ x_k^*  =  v_k r_k \frac{k-1}{k^2} \bigg[ \sum_{\ell\in\cK} d_\ell v_\ell r_\ell \frac{\ell-1}{\ell^2} \bigg]^{\frac{1-\rho}{\rho}}, \quad x^u  =  \frac{1}{\sum_{\ell \in \cK} d_\ell} \bigg[ \sum_{\ell\in\cK} d_\ell v_\ell r_\ell \frac{\ell-1}{\ell^2} \bigg]^{\frac{1}{\rho}}, $$
\begin{equation}
\label{eq:cs}
X^*  =  X^u  =  \bigg[ \sum_{\ell\in\cK} d_\ell v_\ell r_\ell \frac{\ell-1}{\ell^2} \bigg]^{\frac{1}{\rho}}.
\end{equation}
Moreover, the equilibrium payoff for each player $i$ is
$$ \Pi^* = \Pi^u = \sum_{\ell\in\cK} \frac{d_\ell v_\ell}{\ell} - \frac{1}{\rho} \sum_{\ell\in\cK} d_\ell r_\ell v_\ell \frac{\ell-1}{\ell^2}. $$

We then have the following results:
\begin{enumerate}[label=(\roman*)]
\item The property of neutrality holds even when $r_k$ can vary with battles (Theorem \ref{thm:neutrality-tullock}).
\item $X^*$ and $X^u$ are increasing in each $d_\ell$, $v_\ell$ and $r_\ell$ for all $\ell \in \cK$.
\item $\Pi^*$ and $\Pi^u$ are increasing in $d_\ell$ and $v_\ell$ for all $\ell \in \cK$, and decreasing in $r_\ell$ for all $\ell \in \cK$.
\end{enumerate}

These results are straightforward. As $d_\ell$ increases, there are more size-$\ell$ battles and then more total prize to win, and hence the total effort becomes larger.

\paragraph{Optimal design}
When we keep the total prize in the conflict constant, a natural question arises: how can we distribute the total prize among battles to maximize the overall effort level? {We continue to use the parameterized effort cost function $C(X) = \frac{1}{\rho} X^\rho$ and production function $f_k(x) = x^{r_k}$ from the previous part.} According to Equation~\eqref{eq:cs}, it is equivalent to solve the following problem:
$$ \max_{(v_\ell)} \sum_{\ell\in\cK} d_\ell v_\ell r_\ell \frac{\ell-1}{\ell^2}, \ \st \sum_{\ell\in\cK} v_\ell \frac{d_\ell n}{\ell} = V, $$
where $V$ is total prize for all battles. Each player participates in $d_\ell$ of size-$\ell$ battles, and each size-$\ell$ battle is counted exactly $\ell$ times. Thus, $\frac{d_\ell n}{\ell}$ is the total number of battles of size $\ell$.

If $\frac{k-1}{k} r_k < \frac{k'-1}{k'} r_{k'}$ for $k, k' \in \cK$, then zero prize should be assigned to size-$k$ battles. Thus, the designer will allocate the prize only to the battles with the size
$$ \ell^* \in \cK^* = \argmax_{\ell\in\cK} \frac{\ell-1}{\ell} r_\ell. $$
And the maximal total effort of each player is
$$ \bigg[ \sum_{\ell\in\cK^*} d_\ell v_\ell r_\ell \frac{\ell-1}{\ell^2} \bigg]^{\frac{1}{\rho}} = \bigg[ r_{\ell^*} \frac{\ell^*-1}{\ell^*} \sum_{\ell\in\cK^*} d_\ell v_\ell \frac{1}{\ell} \bigg]^{\frac{1}{\rho}} = \bigg[ r_{\ell^*} \frac{\ell^*-1}{\ell^*} \frac{V}{n} \bigg]^{\frac{1}{\rho}}, $$
for any $\ell^*\in\cK^*$.

Since $\frac{\ell-1}{\ell} r_\ell$ is increasing in $r_\ell$ and $\ell$, the optimal design depends on the balance between the $r_\ell$ and the battle size $\ell$. In particular, for homogenous $(r_\ell)$ (i.e., all $r_\ell$ are the same), the designer awards the entire prize budget to the battles with the largest size in $\cK$. Moreover, when $\cK$ is fixed, the maximal total effort does not depend on the profile $(d_\ell)$. That is, the number of battles of the same size has a neutral effect on the overall effort level.

\paragraph{Conflict network with pure budget}
We then turn to another setup where the conflict structure, the prize profile $(v_k)$, and production functions $(f_k=x^{r_k})$ are retained, but the cost function is the pure-budget case. We assume that each player has a budget constraint $B>0$. Each player tries to maximize his expected payoff such that the budget constraint is satisfied.

By the similar arguments of the proof of Proposition~\ref{prop:de-eqm}, we can show that there is a unique Nash equilibrium $\bm{x}^* = (x_k^*)_{k \in \cK}$ under DE, where $x^*_k = B \cdot \frac{v_k r_k \frac{k-1}{k^2}}{\sum_{\ell\in\cK} d_\ell v_\ell r_\ell \frac{\ell-1}{\ell^2}}$. There is also a unique Nash equilibrium $\bm{x}^u = (x^u, x^u, \ldots, x^u)$ under UE, where $x^u = \frac{B}{\sum_{\ell \in \cK} d_\ell}$. As all budgets are fully utilized, this setup also maintains the effort neutrality property.   Payoff neutrality property also holds.

% ===== * ===== * ===== * ===== * ===== * ===== * ===== * ===== * ===== * ===== *
{\small
\bibliographystyle{chicagoa}
\let\oldbibliography\thebibliography
\renewcommand{\thebibliography}[1]{%
  \oldbibliography{#1}%
  \setlength{\itemsep}{1pt}%
}
\bibliography{bib-conflict}
}

% ===== * ===== * ===== * ===== * ===== * ===== * ===== * ===== * ===== * ===== *
\singlespacing
\appendix
\section{Appendix}
\label{sec:appendix}

\subsection{Preliminary result}

Several proofs make use of the following lemma on the function of $h$.

\begin{lem}
\label{lem:h}
Let $f$ be a contest production function satisfying $f(0) = 0$, $f' > 0$, and $f'' \le 0$. Then we have the following results on $h(x) = \frac{f(x)}{f'(x)}$:
\begin{enumerate}
\item $h(x)$ is strictly increasing in $x \in (0, +\infty)$.
\item $\lim\limits_{x \to 0+} h(x) = 0$.
\item $\lim\limits_{x \to +\infty} h(x) = +\infty$.
\end{enumerate}
\end{lem}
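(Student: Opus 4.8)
The plan is to reduce all three parts to a single computation of the derivative $h'$, obtained from $h = f/f'$ by the quotient rule:
\[
h'(x) = \frac{f'(x)^2 - f(x)f''(x)}{f'(x)^2} = 1 - \frac{f(x)f''(x)}{f'(x)^2}.
\]
For $x > 0$ I would first record that $f(x) > 0$: since $f(0) = 0$ and $f' > 0$, the function $f$ is strictly increasing on $(0,+\infty)$, so $f(x) > f(0) = 0$. Combining $f(x) > 0$, $f''(x) \le 0$, and $f'(x)^2 > 0$, the fraction $\frac{f f''}{(f')^2}$ is nonpositive, so $h'(x) \ge 1 > 0$ for every $x > 0$. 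This immediately yields part (1): $h$ is strictly increasing on $(0,+\infty)$.

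For part (3) I would integrate the very same bound. Because $h'(t) \ge 1$ on $(0,+\infty)$, for any $x > 1$ we have $h(x) = h(1) + \int_1^x h'(t)\,dt \ge h(1) + (x-1)$, and the right-hand side diverges to $+\infty$ as $x \to +\infty$. Hence $\lim_{x\to+\infty} h(x) = +\infty$, with no further work needed.

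For part (2) the lower bound on $h'$ is of no help near the origin, so instead I would exploit that $f'' \le 0$ makes $f'$ non-increasing. For $x \in (0,1)$ this gives $f'(x) \ge f'(1) > 0$, so
\[
0 < h(x) = \frac{f(x)}{f'(x)} \le \frac{f(x)}{f'(1)}.
\]
Letting $x \to 0+$ and using continuity of $f$ at $0$ (so that $f(x) \to f(0) = 0$), the squeeze theorem delivers $\lim_{x\to 0+} h(x) = 0$.

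The one point requiring care—and the step I would flag as the main obstacle—is the behaviour of $f'$ near the origin, where it may be unbounded (as for $f(x) = 2\sqrt{x}$). The argument above deliberately sidesteps this by bounding $f'(x)$ below by the fixed positive constant $f'(1)$ rather than attempting to control $\lim_{x\to 0+} f'(x)$; the integrability of $f'$ near $0$ (equivalently, continuity of $f$ at $0$ with $f(0)=0$) is what forces $f(x)\to 0$ and closes the squeeze. I would also note at the outset that $f$ is twice differentiable on $(0,+\infty)$, so that $h'$ is well-defined there and the computation above is legitimate.
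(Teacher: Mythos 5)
Your proof is correct, but for parts (1) and (3) it takes a genuinely different route from the paper's. The paper proves monotonicity of $h$ by observing that the numerator $f$ is positive and strictly increasing while the denominator $f'$ is positive and decreasing, and it proves $h(x)\to+\infty$ by a case analysis on whether $\lim_{x\to+\infty}f'(x)$ is zero or positive; you instead compute $h'(x)=1-\frac{f(x)f''(x)}{f'(x)^2}\ge 1$ and derive both parts from that single inequality, getting (3) by integrating (or, slightly more safely, by the mean value theorem: $h(x)-h(1)=h'(\xi)(x-1)\ge x-1$, which avoids any question of integrability of $h'$ when $f''$ is not assumed continuous). Your derivative bound is the sharper and cleaner tool: it yields quantitative linear growth of $h$ and sidesteps the paper's case analysis, whose first case asserts that $f$ ``converges to a positive constant'' when $f'\to 0$ --- a claim that is not literally true (e.g.\ $f(x)=\sqrt{x}$), though the paper's conclusion still holds since $f$ is bounded below away from zero on $[x_0,+\infty)$. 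Your part (2) coincides with the paper's argument: bound $f'$ below by its value at a fixed point and squeeze using $f(x)\to f(0)=0$. Your closing remark correctly identifies the only delicate point, the possible unboundedness of $f'$ near the origin, and handles it the same way the paper does.
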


\begin{proof}[Proof of Lemma~\ref{lem:h}]
Since $f(0) = 0$ and $f'(x) > 0$ for all $x > 0$, we have that $f(x)$ is strictly increasing in $x \in (0, +\infty)$ and $f(x) > 0$ for all $x > 0$. Moreover, since $f''(x) \le 0$, we have that $f'(x)$ is decreasing in $x \in (0, +\infty)$. Thus, together with the fact $f'(x) > 0$ for all $x > 0$, we have that $h(x) = \frac{f(x)}{f'(x)}$ is positive for all $x > 0$ and is strictly increasing in $x \in (0, +\infty)$.

Fix $x_0 > 0$. Since $f'(x) > 0$ is decreasing in $x \in (0, +\infty)$, we have that $f'(x) \ge f'(x_0) > 0$ for any $x \in (0, x_0]$. Hence, $0 < \frac{f(x)}{f'(x)} \le \frac{f(x)}{f'(x_0)}$ for any $x \in (0, x_0]$. Letting $x \to 0+$, we have $\lim\limits_{x \to 0+} f(x) = 0$, and hence $\lim\limits_{x \to 0+} \frac{f(x)}{f'(x_0)} = 0$. Thus, $\lim\limits_{x \to 0+} h(x) = \lim\limits_{x \to 0+} \frac{f(x)}{f'(x)} = 0$.

Since $f'(x) > 0$ is decreasing in $x \in (0, +\infty)$, $f'(x)$ has a nonnegative lower bound. That is, we have either $\lim\limits_{x \to +\infty} f'(x) = 0$ or $\lim\limits_{x \to +\infty} f'(x) > 0$. (1) If $\lim\limits_{x \to +\infty} f'(x) = 0$, then $f(x)$ converges to a positive constant as $x \to +\infty$, and hence $\lim\limits_{x \to +\infty} h(x) = \lim\limits_{x \to +\infty} \frac{f(x)}{f'(x)} = +\infty$. (2) If $\lim\limits_{x \to +\infty} f'(x) > 0$, then $\lim\limits_{x \to +\infty} f(x) = +\infty$, and hence $\lim\limits_{x \to +\infty} h(x) = \lim\limits_{x \to +\infty} \frac{f(x)}{f'(x)} = +\infty$. So in both cases, we have that $\lim\limits_{x \to +\infty} h(x) = \lim\limits_{x \to +\infty} \frac{f(x)}{f'(x)} = +\infty$.
\end{proof}

\subsection{Proofs of Propositions~\ref{prop:de-eqm} and \ref{prop:ue-eqm}}

\begin{proof}[Proof of Proposition~\ref{prop:de-eqm}]
Let $\bigl( \cN, \cT, \bm{\Gamma}, f(\cdot), (v_k)_{k \in \cK}, C(\cdot) \bigr)$ be a semi-symmetric conflict network. We also adopt another equivalent representation $\bigl( \cN, \cT, \bm{\Gamma}, f(\cdot), (v^t)_{t \in \cT}, C(\cdot) \bigr)$ for convenience. We first construct an interior semi-symmetric Nash equilibrium under DE, and then show it to be unique.

It is easy to see that each payoff function $\Pi_i(\bm{x}_i, \bm{x}_{-i})$ is concave in $\bm{x}_i$. Thus, in equilibrium, each $\bm{x}_i$ should satisfy the first-order conditions---each player $i$'s marginal benefit from exerting increment effort in each battle $t$ must be no more than the marginal cost, with equality if the solution $x_i^t$ is interior. That is, for each battle $t$, we have
$$ v^t \cdot \frac{f'(x_i^t) \cdot \Big[ \sum\limits_{j \in \cN^t, j \ne i} f(x^t_j) \Big]}{\Big[ \sum\limits_{j \in \cN^t} f(x_j^t) \Big]^2} \le C' \Big( \sum_{t \in \cT_i} x_i^t \Big) \text{ with equality if $x_i^t > 0$}. $$

We would like to construct a semi-symmetric strategy profile satisfying the above first-order conditions with equality. That is, we try to provide a solution $(x_k)_{k\in\cK}$ for the system of equations
\begin{align}
v_k \cdot \frac{k-1}{k^2} \cdot \frac{f'(x_k)}{f(x_k)}	& = C'(\mu) \text{ for each $k \in \cK$},	\label{eq:de-FOC1}	\\
\mu														& = \sum_{\ell \in \cK} d_\ell x_\ell.	\label{eq:de-MC1}
\end{align}
For each $\mu > 0$, we have $C'(\mu) > 0$. By Lemma~\ref{lem:h}, $\frac{f'(x_k)}{f(x_k)}$ is strictly decreasing in $x_k \in (0, +\infty)$, $\lim\limits_{x_k \to 0+} \frac{f'(x_k)}{f(x_k)} = +\infty$, and $\lim\limits_{x_k \to +\infty} \frac{f'(x_k)}{f(x_k)} = 0$. Thus, there exists a unique solution for Equation~\eqref{eq:de-FOC1}, denoted by $x_k = g_k(\mu)$. Clearly, $x_k = g_k(\mu) > 0$ for all $\mu > 0$. Since $C(\cdot)$ is strictly increasing, $g_k(\mu)$ is decreasing in $\mu \in (0, +\infty)$. Substituting $x_k = g_k(\mu)$ into Equation~\eqref{eq:de-MC1}, we have
\begin{equation}
\label{eq:de-mu}
\mu = \sum_{\ell \in \cK} d_\ell g_\ell(\mu).
\end{equation}
Clearly, the LHS of Equation~\eqref{eq:de-mu} is strictly increasing in $\mu$ and the RHS is positive and decreasing in $\mu$. Thus, Equation~\eqref{eq:de-mu} admits a unique solution, denoted by $\mu^*$. Let $x^*_k = g_k(\mu^*) > 0$ for each $k \in \cK$ and $\lambda^* = C'(\sum_{\ell \in \cK} d_\ell x^*_\ell)$.

Let $\bm{x}^* = \big( (x_i^t)_{t \in \cT_i} \big)_{i \in \cN}$ be a semi-symmetric strategy profile so that $x_i^t = x^*_k > 0$ for each battle $t$ with size $k$. The above analysis shows that this interior strategy profile $\bm{x}^*$ satisfies the first-order conditions with equalities. Since each payoff function $\Pi_i(\bm{x}_i, \bm{x}_{-i})$ is concave in $\bm{x}_i$, the semi-symmetric strategy profile $\bm{x}^*$ is a Nash equilibrium.

For the uniqueness, we directly follow Theorem~2(ii) in \cite{xu2022networks}.
\end{proof}

\medskip

\begin{proof}[Proof of Proposition~\ref{prop:ue-eqm}]
Let $\bigl( \cN, \cT, \bm{\Gamma}, f(\cdot), (v_k)_{k\in\cK}, C(\cdot) \bigr)$ be a semi-symmetric conflict network. We shall construct an interior symmetric Nash equilibrium under UE.

It is easy to see that each payoff function $\Pi_i(x_i, x_{-i})$ is concave in $x_i$. Thus, in equilibrium, each $x_i$ should satisfy the first-order conditions---each player $i$'s marginal benefit from exerting increment effort in each battle $t$ must be no more than the marginal cost, with equality if the solution $x_i$ is interior. That is,
$$ \sum_{t \in \cT_i} v^t \cdot \frac{f'(x_i) \cdot \Big[ \sum\limits_{j \in \cN^t, j \ne i} f(x_j) \Big]}{\Big[ \sum\limits_{j \in \cN^t} f(x_j) \Big]^2}  \le  C' \Big( \sum_{\ell \in \cK} d_\ell x_i \Big) \cdot \Big( \sum_{\ell \in \cK} d_\ell \Big) \text{ with equality if $x_i > 0$}. $$

We shall construct a symmetric strategy profile $\bm{x} = (x, x, \ldots, x)$ satisfying the above first-order condition with equality. That is, we try to provide a solution $x$ for the system of equations
\begin{align}
\sum_{\ell \in \cK} d_\ell v_\ell \cdot \frac{\ell-1}{\ell^2} \cdot \frac{f'(x)}{f(x)}	& =  C'(\mu) \cdot \Big( \sum_{\ell \in \cK} d_\ell \Big),	\label{eq:ue-FOC1}	\\
\mu		& = \sum_{\ell \in \cK} d_\ell x.	\label{eq:ue-MC1}
\end{align}
For each $\mu > 0$, we have $C'(\mu) > 0$. By Lemma~\ref{lem:h}, $\frac{f'(x)}{f(x)}$ is strictly decreasing in $x \in (0, +\infty)$, $\lim\limits_{x \to 0+} \frac{f'(x)}{f(x)} = +\infty$, and $\lim\limits_{x \to +\infty} \frac{f'(x)}{f(x)} = 0$. Thus, there exists a unique solution for Equation~\eqref{eq:ue-FOC1}, denoted by $x = g(\mu)$. Clearly, $x = g(\mu) > 0$ for all $\mu > 0$. Since $C(\cdot)$ is strictly increasing, $g(\mu)$ is decreasing in $\mu \in (0, +\infty)$. Substituting $x = g(\mu)$ into Equation~\eqref{eq:ue-MC1}, we have
\begin{equation}
\label{eq:ue-mu}
\mu = \sum_{\ell \in \cK} d_\ell g(\mu).
\end{equation}
Clearly, the LHS of Equation~\eqref{eq:ue-mu} is strictly increasing in $\mu$ and the RHS is positive and decreasing in $\mu$. Thus, Equation~\eqref{eq:ue-mu} admits a unique solution, denoted by $\mu^u$. Let $x^u = g(\mu^u) > 0$ and $\lambda^u = C'(\sum_{\ell\in\cK} d_\ell x^u) \cdot (\sum_{\ell\in\cK} d_\ell)$.

Let $\bm{x}^u = (x^u, x^u, \ldots, x^u)$ be a symmetric strategy profile. The above analysis shows that the interior strategy profile $\bm{x}^u$ satisfies the first-order conditions with equality. Since each payoff function $\Pi^u_i(\bm{x}_i, \bm{x}_{-i})$ is concave in $\bm{x}_i$, the symmetric strategy profile $\bm{x}^u$ is a Nash equilibrium.

For the uniqueness, we directly follow Proposition~5 in \cite{xu2022networks}.
\end{proof}

\subsection{Proof of Theorem~\ref{thm:comparison-h}}

\begin{proof}[Proof of Theorem~\ref{thm:comparison-h}]
Suppose DE has a higher total effort than UE when $h$ is convex. Then we have $\sum_{\ell \in \cK} d_\ell x^*_\ell > \sum_{\ell \in \cK} d_\ell x^u$. Thus, by Equations~\eqref{eq:de-MC} and \eqref{eq:ue-MC}, we have
$$ \lambda^* = C' \Bigl( \sum_{\ell \in \cK} d_\ell x^*_\ell \Bigr) > C' \Bigl( \sum_{\ell \in \cK} d_\ell x^u \Bigr) = \frac{\lambda^u}{\sum_{\ell \in \cK} d_\ell}. $$

From Equation~\eqref{eq:de-FOC}, we have
$$ v_k \cdot \frac{k-1}{k^2} \cdot \frac{1}{\lambda^*} = h(x^*_k) \text{ for each $k \in \cK$}. $$
Summing across all $k \in \cK$ with weights $d_k$, we then have
$$ \sum_{k \in \cK} d_k v_k \cdot \frac{k-1}{k^2} \cdot \frac{1}{\lambda^*} = \sum_{k \in \cK} d_k h(x^*_k) \text{ or } \frac{\sum_{k \in \cK} d_k v_k \frac{k-1}{k^2} \frac{1}{\lambda^*}}{\sum_{\ell \in \cK} d_\ell} = \frac{\sum_{k \in \cK} d_k h(x^*_k)}{\sum_{\ell \in \cK} d_\ell}. $$

From Equation~\eqref{eq:ue-FOC}, we have
$$ \sum_{\ell \in \cK} d_\ell v_\ell \cdot \frac{\ell-1}{\ell^2} \cdot \frac{1}{\lambda^u} = h(x^u). $$

Since $h$ is convex, we have
$$ \frac{\sum_{k \in \cK} d_k h(x^*_k)}{\sum_{\ell \in \cK} d_\ell} \ge h\left(\frac{\sum_{k\in\cK} d_k x^*_k}{\sum_{\ell\in\cK} d_\ell}\right). $$
Thus, we have
\begin{align*}
\frac{\sum_{k \in \cK} d_k v_k \frac{k-1}{k^2} \frac{1}{\lambda^*}}{\sum_{\ell \in \cK} d_\ell}	& = \frac{\sum_{k \in \cK} d_k h(x^*_k)}{\sum_{\ell \in \cK} d_\ell} \ge h\left(\frac{\sum_{k\in\cK} d_k x^*_k}{\sum_{\ell\in\cK} d_\ell}\right)	\\
	& > h\left(\frac{\sum_{k\in\cK} d_k x^u}{\sum_{\ell\in\cK} d_\ell}\right) = h(x^u) = \sum_{\ell \in \cK} d_\ell v_\ell \frac{\ell-1}{\ell^2} \frac{1}{\lambda^u}.
\end{align*}
That is, $\frac{\lambda^u}{\sum_{\ell\in\cK} d_\ell} > \lambda^*$, which leads to a contradiction. Therefore, DE has a lower total effort than UE.

By the similar arguments, one can prove that DE has a higher total effort than UE if $h$ is concave, and DE has the same total effort as UE if $h$ is linear.
\end{proof}

\subsection{Proof of Theorem~\ref{thm:neutrality-tullock}}

For each $k \in \cK$, let
\begin{equation}
\label{eq:h}
h_k(x) = \frac{f_k(x)}{f'_k(x)}.
\end{equation}
Lemma~\ref{lem:h} implies that (1) $h_k(x)$ is strictly increasing in $x \in (0, +\infty)$; (2) $\lim\limits_{x \to 0+} h_k(x) = 0$; (3) $\lim\limits_{x \to +\infty} h_k(x) = +\infty$.

\begin{lem}
\label{lem:linear}
For two distinct $m$ and $n$ in $\cK$, if
\begin{equation}
\label{eq:h_k}
(d_m + d_n) \cdot h_m(\tfrac{d_m z_m + d_n z_n}{d_m+d_n})  =  d_m \cdot h_m(z_m) + d_n \cdot h_m(z_n)
\end{equation}
holds for any positive $z_m$ and $z_n$, then $h_m(z) = \frac{1}{r_m} z$ for some $r_m > 0$.
\end{lem}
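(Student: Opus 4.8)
The plan is to recognize \eqref{eq:h_k} as a weighted Jensen functional equation in the single fixed ratio $\lambda := d_m/(d_m+d_n) \in (0,1)$ and to reduce it to Cauchy's additive equation by exploiting the boundary behavior of $h_m$ at the origin. After dividing \eqref{eq:h_k} by $d_m + d_n$, it reads $h_m\bigl(\lambda z_m + (1-\lambda) z_n\bigr) = \lambda h_m(z_m) + (1-\lambda) h_m(z_n)$ for all $z_m, z_n > 0$. By Lemma~\ref{lem:h}(2) we have $\lim_{x \to 0+} h_m(x) = 0$, so first I would extend $h_m$ continuously to $[0,\infty)$ by setting $h_m(0) = 0$, which legitimizes evaluating the identity in the limit as an argument tends to the origin.

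Next I would extract a homogeneity relation by sending one argument to $0$. Letting $z_n \to 0+$ (and using $h_m(0) = 0$) yields $h_m(\lambda z_m) = \lambda h_m(z_m)$ for every $z_m > 0$; the symmetric limit $z_m \to 0+$ gives $h_m\bigl((1-\lambda) z_n\bigr) = (1-\lambda) h_m(z_n)$ for every $z_n > 0$. Substituting these two identities back into the Jensen form collapses the right-hand side into $h_m(\lambda z_m) + h_m\bigl((1-\lambda) z_n\bigr)$. Setting $s := \lambda z_m$ and $t := (1-\lambda) z_n$, which together sweep out all of $(0,\infty)^2$ as $z_m, z_n$ range over $(0,\infty)$, produces exactly Cauchy's equation $h_m(s+t) = h_m(s) + h_m(t)$ for all $s, t > 0$.

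Finally I would invoke the regularity supplied by Lemma~\ref{lem:h}(1): a monotone solution of Cauchy's equation on $(0,\infty)$ is necessarily linear, so $h_m(z) = c\,z$ for some constant $c$. Strict monotonicity forces $c > 0$, and writing $c = 1/r_m$ gives $h_m(z) = z/r_m$ with $r_m > 0$, as claimed.

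The delicate point I expect to be the main obstacle is the passage to the limit: one must justify continuity of $h_m$ at the interior point $\lambda z_m$ (resp.\ $(1-\lambda) z_n$) so that the limit may be taken inside $h_m$. This is immediate because $h_m = f_m / f_m'$ is continuous (with $f_m' > 0$), but if one wishes to rely only on the monotonicity granted by Lemma~\ref{lem:h}(1), it can instead be handled by noting that a monotone function has at most countably many discontinuities and propagating the identities by a density argument. Everything else is routine once the Cauchy reduction is in place.
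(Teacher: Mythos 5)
Your proposal is correct and follows essentially the same route as the paper's own proof: send each argument to $0+$ (using $\lim_{x\to 0+}h_m(x)=0$ from Lemma~\ref{lem:h}) to extract the two homogeneity identities, substitute back to obtain Cauchy's equation $h_m(s+t)=h_m(s)+h_m(t)$ on $(0,\infty)$, and conclude linearity from monotonicity. Your extra remark justifying the passage to the limit via continuity of $h_m=f_m/f_m'$ is a point the paper leaves implicit, but it does not change the argument.
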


\begin{proof}[Proof of Lemma~\ref{lem:linear}]
By Lemma~\ref{lem:h}, $\lim_{z \downarrow 0} h_n(z) = 0$. Letting $z_n \downarrow 0$ in Equation~\eqref{eq:h_k}, we have for each $z_m > 0$,
$$ (d_m + d_n) \cdot h_m(\tfrac{d_m z_m}{d_m + d_n})  =  d_m \cdot h_m(z_m). $$
Similarly, letting $z_m \downarrow 0$ in Equation~\eqref{eq:h_k}, we have for each $z_n > 0$,
$$ (d_m + d_n) \cdot h_m(\tfrac{d_n z_n}{d_m + d_n})  =  d_n \cdot h_m(z_n). $$
Then, for all $z_m > 0$ and $z_n > 0$, we have
\begin{align*}
(d_m + d_n) \cdot h_m(\tfrac{d_m z_m + d_n z_n}{d_m+d_n})	& = d_m \cdot h_m(z_m) + d_n \cdot h_m(z_n)	\\
													& = (d_m + d_n) \cdot h_m(\tfrac{d_m z_m}{d_m + d_n}) + (d_m + d_n) \cdot h_m(\tfrac{d_n z_n}{d_m + d_n}),
\end{align*}
that is,
$$ h_m(\tfrac{d_m z_m + d_n z_n}{d_m+d_n})  =  h_m(\tfrac{d_m z_m}{d_m + d_n}) + h_m(\tfrac{d_n z_n}{d_m + d_n}). $$
So for any $y > 0$ and $y' > 0$,
$$ h_m(y+y')  =  h_m(y) + h_m(y'). $$
Thus, $h_m$ is a linear function, i.e., $h_m(z) = \frac{1}{r_m} z$ for some $r_m > 0$.
\end{proof}

\begin{proof}[Proof of Theorem~\ref{thm:neutrality-tullock}]
For any semi-symmetric conflict network $\bigl( \cN, \cT, \bm{\Gamma}, (f_k, v_k)_{k\in\cK}, C \bigr)$, one can have analogous equilibrium characterizations as in Propositions~\ref{prop:de-eqm} and \ref{prop:ue-eqm}:
\begin{align*}
v_k \cdot \frac{k-1}{k^2} \cdot \frac{f_k'(x^*_k)}{f_k(x^*_k)}	& = C' \Big( \sum_{\ell\in\cK} d_\ell x^*_\ell \Big) \text{ for each $k \in \cK$,}	\\
\sum_{\ell\in\cK} d_\ell v_\ell \cdot \frac{\ell-1}{\ell^2} \cdot \frac{f_\ell'(x^u)}{f_\ell(x^u)}	& = C' \Big( \sum_{\ell\in\cK} d_\ell x^u \Big) \cdot \Big( \sum_{\ell\in\cK} d_\ell \Big).
\end{align*}
The second equation can be rewritten as
$$ \sum_{\ell\in\cK} \frac{d_\ell v_\ell \cdot \frac{\ell-1}{\ell^2} \cdot \frac{f_\ell'(x^u)}{f_\ell(x^u)}}{C'(\sum_{\ell'\in\cK} d_{\ell'} x^u)} = \sum_{\ell\in\cK} d_\ell. $$

By the assumption, the total efforts of each player are the same, i.e., $\sum_{\ell\in\cK} d_\ell x^*_\ell = \sum_{\ell\in\cK} d_\ell x^u$. Then we have
\begin{align}
\sum_{\ell\in\cK} d_\ell    & =  \sum_{\ell\in\cK} \frac{d_\ell v_\ell \cdot \frac{\ell-1}{\ell^2} \cdot \frac{f_\ell'(x^u)}{f_\ell(x^u)}}{C'(\sum_{\ell'\in\cK} d_{\ell'} x^u)}  = \sum_{\ell\in\cK} \frac{d_\ell v_\ell \cdot \frac{\ell-1}{\ell^2} \cdot \frac{f_\ell'(x^u)}{f_\ell(x^u)}}{C'(\sum_{\ell'\in\cK} d_{\ell'} x^*_{\ell'})}   \nonumber\\
    & = \sum_{\ell\in\cK} \frac{d_\ell v_\ell \cdot \frac{\ell-1}{\ell^2} \cdot \frac{f_\ell'(x^u)}{f_\ell(x^u)}}{v_\ell \cdot \frac{\ell-1}{\ell^2} \cdot \frac{f_\ell'(x^*_\ell)}{f_\ell(x^*_\ell)}} = \sum_{\ell\in\cK} \frac{d_\ell h_\ell(x^*_\ell)}{h_\ell(x^u)}.    \label{eq:key}
\end{align}

\bigskip

Pick any two distinct indexes $m$ and $n$ in $\cK$, and a positive constant $z$. Let $z_m = z + \frac{\varepsilon}{d_m}$, $z_n = z - \frac{\varepsilon}{d_n}$, and $z_k = z$ for each $k \in \cK$ with $k \ne m, n$ (if any). Clearly, the vector $(z_k)_{k\in\cK}$ satisfies $\sum_{\ell\in\cK} d_\ell z_\ell = \sum_{\ell\in\cK} d_\ell z$. Based on the similar arguments in Remark \ref{rmd:de-eqm}, one can find a new semi-symmetric conflict network $\hat{H} = \bigl( \cN, \cT, \bm{\Gamma}, (f_k, \hat{v}_k)_{k\in\cK}, C \bigr)$ such that $\bm{z} = (\bm{z}_i)_{i\in\cN} = \bigl( (z_i^t)_{t\in\cT} \bigr)_{i\in\cN}$ is the unique Nash equilibrium therein, where $z_i^t = z_k$ for each battle $t$ with size $k$. Notice that $\hat{H}$ is also in the collection $H$. By the assumption, the symmetric Nash equilibrium $\bm{z}^u = (z^u, z^u, \ldots, z^u)$ under UE has the same total effort for each player with $\bm{z}$. That is, $\sum_{\ell\in\cK} d_\ell z^u = \sum_{\ell\in\cK} d_\ell z_\ell = \sum_{\ell\in\cK} d_\ell z$. Thus, $z^u = z$. That is, $(z, z, \ldots, z)$ is a symmetric Nash equilibrium of $\hat{H}$ under UE.

Till here, we have that $\bm{z} = (\bm{z}_i)_{i\in\cN}$ is the semi-symmetric Nash equilibrium of $\hat{H}$ under DE, and $\bm{z}^u = (z, z, \ldots, z)$ is a symmetric Nash equilibrium of $\hat{H}$ under UE. By repeating the similar arguments in Equation~\eqref{eq:key}, we obtain
\begin{align*}
\frac{d_m h_m(z)}{h_m(z)} + \frac{d_n h_n(z)}{h_n(z)} + \sum_{k \ne m, n} d_k   & = \sum_{\ell\in\cK} d_\ell = \sum_{\ell\in\cK} \frac{d_\ell h_\ell(z_\ell)}{h_\ell(z)} \\
    & = \frac{d_m h_m(z_m)}{h_m(z)} + \frac{d_n h_n(z_n)}{h_n(z)} + \sum_{k \ne m, n} \frac{d_k h_k(z_k)}{h_k(z)}    \\
    & = \frac{d_m h_m(z+\frac{\varepsilon}{d_m})}{h_m(z)} + \frac{d_n h_n(z-\frac{\varepsilon}{d_n})}{h_n(z)} + \sum_{k \ne m, n} \frac{d_k h_k(z)}{h_k(z)},
\end{align*}
and hence
\begin{equation}
\label{eq:m&n}
\frac{d_m h_m(z+\frac{\varepsilon}{d_m})}{h_m(z)} + \frac{d_n h_n(z-\frac{\varepsilon}{d_n})}{h_n(z)} = \frac{d_m h_m(z)}{h_m(z)} + \frac{d_n h_n(z)}{h_n(z)}.
\end{equation}
Therefore,
$$ \frac{1}{h_m(z)} \cdot \frac{h_m(z+\frac{\varepsilon}{d_m}) - h_m(z)}{\frac{\varepsilon}{d_m}} = \frac{1}{h_n(z)} \cdot \frac{h_n(z) - h_n(z-\frac{\varepsilon}{d_n})}{\frac{\varepsilon}{d_n}}. $$

Letting $\varepsilon \to 0$, we have that for each $z > 0$,
$$ \frac{h_m'(z)}{h_m(z)} = \frac{h_n'(z)}{h_n(z)} > 0. $$
Then $\frac{\dif}{\dif z} \left[ \log h_m(z) \right] = \frac{\dif}{\dif z} \left[ \log h_n(z) \right]$, and hence $h_m(z) = c \cdot h_n(z)$ for each $z > 0$, where $c$ is a positive constant number.

Substituting into Equation~\eqref{eq:m&n}, we have
$$ d_m + d_n = \frac{d_m h_m(z_m)}{h_m(z)} + \frac{d_n h_n(z_n)}{h_n(z)} = \frac{d_m h_m(z_m)}{h_m(z)} + \frac{d_n h_m(z_n)}{h_m(z)}, $$
or
$$ (d_m + d_n) h_m(z)  =  d_m h_m(z_m) + d_n h_m(z_n), $$
where $(d_m+d_n) z = d_m z_m + d_n z_n$.
Since $z$ and $\varepsilon$ are flexible, the equation above holds for all $z_m > 0$ and $z_n > 0$. By Lemma~\ref{lem:linear}, $h_m(x) = \frac{1}{r_m} x$ for some $r_m > 0$. Therefore,
$$ \bigl( \log f_m(x) \bigr)' = \frac{f'_m(x)}{f_m(x)} = \frac{1}{h_m(x)} = \frac{r_m}{x} = (r_m \log x)'. $$
Since $f_m(0) = 0$, we have $f_m(x) = A_m x^{r_m}$.

Since the index $m$ is randomly picked, each $f_k(x)$ should be of the power form $A_k x^{r_k}$ for some $A_k$ and $r_k$.

Lastly, since we have assumed that each $f_k(x)$ satisfies $f_k'(x) > 0$ and $f_k''(x) \le 0$ for all $x > 0$, each parameter $r_k$ should lie in $(0, 1]$.
\end{proof}

\end{document}